\pdfoutput=1
\documentclass[11pt,reqno]{amsart}
\usepackage[letterpaper,textwidth=6in,textheight=8.5in,centering]{geometry}
\usepackage{amssymb,mathtools}
\usepackage{microtype}
\usepackage[hidelinks,hypertexnames=false]{hyperref}
\theoremstyle{plain}
\newtheorem{thm}{Theorem}[section]
\newtheorem{cor}[thm]{Corollary}
\newtheorem{lem}[thm]{Lemma}
\newtheorem{prop}[thm]{Proposition}
\newtheoremstyle{cited}{6pt}{6pt}{\itshape}{}{\bfseries}{.}{5pt plus 1pt minus 1pt}{\thmname{#1} \thmnumber{#2} \thmnote{\normalfont#3}}
\theoremstyle{cited}
\newtheorem{thmC}[thm]{Theorem}
\theoremstyle{definition}
\newtheorem{defi}[thm]{Definition}
\numberwithin{equation}{section}
\hypersetup{pdftitle={Automatic constraints with few subpowers and graphoid recognition},pdfauthor={Antonios Kalampakas},pdfkeywords={constraint satisfaction, finite automata, few subpowers, Mal'tsev operations, graphoid automata}}
\date{}
\keywords{constraint satisfaction, finite automata, few subpowers, Mal'tsev operations, graphoid automata}
\newcommand{\Mal}{Mal'tsev}
\newcommand{\Sig}{\operatorname{Sig}}
\newcommand{\Sol}{\operatorname{Sol}}
\newcommand{\AutCSP}{\operatorname{AutCSP}}
\newcommand{\pr}{\operatorname{pr}}
\newcommand{\Eq}{\operatorname{Eq}}
\newcommand{\Pclass}{\mathsf{P}}
\newcommand{\sharpP}{\mathsf{\#P}}
\newcommand{\Ftwo}{\mathbb F_2}
\newcommand{\Fp}{\mathbb F_p}
\newcommand{\aff}{\operatorname{aff}}
\newcommand{\Span}{\operatorname{span}}
\newcommand{\maj}{\operatorname{maj}}
\newcommand{\clos}[2]{\langle #1\rangle_{#2}}
\begin{document}
\title[Automatic constraints with few subpowers]{Automatic constraints with few subpowers and graphoid recognition}
\author[A. Kalampakas]{Antonios Kalampakas}
\address{American University of the Middle East, Egaila 54200, Kuwait}
\email{Antonios.Kalampakas@aum.edu.kw}
\urladdr{https://orcid.org/0000-0001-8821-6102}
\begin{abstract}
Finite automata can describe relations of unbounded arity that are exponentially larger than their descriptions. We prove that constraint satisfaction for such relations is solvable in polynomial time whenever their length slices are preserved by a common fixed edge operation on a finite domain. The algorithm computes compact representations of the complete solution relation and its projections. Its main ingredient is a polynomial-time compilation of nondeterministic finite automata into the fork witnesses and small projections required by the few-subpowers algorithm. In the Mal'tsev case, a direct proof is polynomial also when the domain and operation table are supplied as input, answering the Mal'tsev tractability question for automatic constraint satisfaction. We also characterize all invariant relations of a family of 3-edge algebras with neither Mal'tsev nor near-unanimity terms. Their normal forms combine Boolean activity constraints with affine value spaces and yield canonical quadratic-bit representations constructible from NFAs or arbitrary generators. For graphoid automata, these results give polynomial-time recognition without a graph-width restriction, effective boundary composition, and comparison of finite graph relations. The quadratic boundary bounds are optimal in the worst case. A fixed three-state example separates polynomial-time recognition from hard exact counting.
\end{abstract}
\maketitle

\section{Introduction}

A relation given by a finite automaton may contain exponentially many tuples of a specified length. This succinctness is useful for constraints on long sequences, but it prevents a direct application of algorithms whose running time is polynomial in the number of allowed tuples. The representation of a constraint must therefore be considered together with its algebraic structure.

Compact representations provide one way to make that connection. Bulatov and Dalmau proved tractability of finite-domain constraints preserved by a \Mal\ operation by maintaining small sets of tuples with prescribed fork witnesses \cite{BD2006}. Dalmau extended this approach to generalized majority-minority operations \cite{Dalmau2006}. The few-subpowers algorithm of Idziak, Markovi\'c, McKenzie, Valeriote, and Willard applies whenever the constraint language has an edge polymorphism \cite{IMMVW2010}. In these algorithms, the complete solution relation can be represented by polynomially many tuples even when it contains exponentially many solutions. Applying the algorithms to a new input representation requires an effective construction of the particular witnesses that their invariants demand.

We show that nondeterministic finite automata supply these witnesses directly. Two accepted words witness a fork at a coordinate when they agree before that coordinate and have prescribed letters there. A synchronous product of two copies of the automaton finds all possible common prefixes. Independent accepting suffixes then complete the witnesses. Bounded coordinate projections are obtained by layered reachability with prescribed letters. These two elementary constructions compile an automaton into an exact compact representation whenever its relevant length slice is closed under the given edge operation.

The resulting algorithm solves instances with arbitrarily many automaton-described constraints, unbounded arities, and repeated variables in constraint scopes. It computes the full solution relation in compact form and permits ordered projections with repeated coordinates. For a fixed domain with an $\ell$-edge operation, a relation on $k$ output coordinates is represented by $O(k^{\ell-1})$ tuples. For \Mal\ operations, the representation uses $O(k)$ tuples and $O(k^2)$ bits. The \Mal\ algorithm is also polynomial when the finite domain and the full operation table are part of the input, under the promise that the operation preserves the supplied relations.

This result addresses a question raised in the recent work of Gong, Wang, Khoussainov, and Bulatov on automatic constraint satisfaction \cite{Gong2026}. They establish tractability for several algebraic classes of automatic languages, including the Boolean affine and near-unanimity cases. Section~9 of their preprint asks whether a \Mal\ polymorphism suffices. Their model uses explicit variable scopes and relations obtained by intersecting a regular language with a specified power of a finite domain. Our \Mal\ theorem gives an affirmative answer in this model, and the edge-operation theorem extends the conclusion to all fixed finite algebras with few subpowers.

Earlier work of Chen and Grohe studies the effect of succinct relation descriptions on CSP complexity \cite{CG2006,CG2010}. Their work motivates the need for algorithms measured in the size of the description. Here the essential interface with the algebraic algorithm consists of fork witnesses and witnesses for bounded projections. These are more informative than an arbitrary small generating set. The distinction is also central to the subpower membership problem \cite{BMS2019}. Our construction starts from an automaton for the relation itself and does not require a procedure that converts arbitrary algebraic generators into a compact representation.

For a particular family of 3-edge algebras, we obtain a sharper representation by describing all invariant relations explicitly. The domain consists of a prime field together with an inactive state. Each relation is determined by a Boolean majority-closed relation on its activity bits and a compatible affine space of values. The compatibility forces a decomposition into an affine factor on always-active coordinates and independent linear factors on classes with a common variable activity. This normal form has a canonical $O(k^2)$-bit encoding, computable from either an NFA or arbitrary generators, and supports a direct calculus using 2-SAT and Gaussian elimination. The family has neither a \Mal\ term nor a near-unanimity term, and the quadratic bound is optimal.

Short definitions provide a related measure of representation size. Bul\'in and Kompatscher prove quadratic-length pp-definitions for three-element 3-edge languages \cite[Corollary~4.7]{BK2026} and ask whether short definitions can be computed efficiently from generators \cite[Question~6.3]{BK2026}. Our normal form gives an explicit construction for the displayed family over every prime field. The structural characterization and the NFA and generator conversions refine the existence bound in this case. Mixed majority and affine examples with short definitions also occur in \cite[Example~6.1]{BK2026}, following \cite[Example~2.3.2]{Brady2025}.

Our motivating application is the recognition problem for graphoid automata \cite{BK2004,BK2008}. In their canonical unitary relational interpretation, a graph imposes local constraints on its vertices, while its ordered input and output boundaries carry regular word constraints. The theorem gives polynomial-time recognition when these relations share an edge polymorphism, without a restriction on graph width. It also gives an effective calculus of boundary summaries: serial composition, parallel composition, and identification of boundary vertices can be performed directly on compact representations. The \Mal\ case and the prime-field family admit optimal quadratic-bit boundary encodings. For a fixed transition alphabet associated with the latter family, the normal form also constructs a graph with quadratically many vertices and edges defining the same boundary relation.

The general compilation theorem and the explicit normal form are the two main contributions. For general edge operations, the calculus of compact representations is established machinery, which we state with attribution. We include a direct proof of the \Mal\ case to exhibit the equality updates needed for unbounded scopes and to obtain the semiuniform result. Finally, a three-element example shows that the recognition theorem does not imply tractable exact counting. Its hardness follows from the weighted Boolean counting dichotomy of Dyer, Goldberg, and Jerrum \cite{DGJ2009}.

\section{Regular constraints and witness extraction}\label{sec:regular}

Let $D$ be a nonempty finite set and put $d=|D|$. For $r\geq 0$, write $[r]=\{1,\ldots,r\}$ and identify a word of length $r$ over $D$ with an element of $D^r$. For $u\in D^r$, the notation $u_{<i}$ denotes its prefix of length $i-1$. If $J\subseteq[r]$, then $\pr_J(u)$ lists the coordinates in increasing order. We write $\pr_J(R)=\{\pr_J(u):u\in R\}$ for the corresponding projection of a relation. Ordered coordinate lists, including repeated coordinates, will be treated explicitly when constructing output relations.

An operation $f:D^t\to D$ acts coordinatewise on tuples of a common length. It \emph{preserves} $R\subseteq D^r$ if $f(u^1,\ldots,u^t)\in R$ whenever $u^1,\ldots,u^t\in R$. We also say that $R$ is $f$-closed or that $f$ is a polymorphism of $R$. Intersections, Cartesian products, and projections of $f$-closed relations are $f$-closed. Equality is preserved by every operation. If $f$ is idempotent, every singleton relation is also preserved by $f$.

An epsilon-free NFA is a tuple $M=(Q,Q_0,\Delta,Q_f)$, where $Q_0,Q_f\subseteq Q$ and $\Delta\subseteq Q\times D\times Q$. Multiple initial states and multiple accepting paths are allowed. For an explicit arity $r$, set
\[
 R(M,r)=L(M)\cap D^r.
\]
Allowing epsilon transitions does not change our results, since they can be eliminated in polynomial time before the constructions below.

An \emph{automatic constraint instance} consists of a finite variable set $V$ and constraints $(\mathbf v_j,M_j)$, where $\mathbf v_j\in V^{r_j}$ is an explicit ordered scope. A map $h:V\to D$ is a solution if $h(\mathbf v_j)\in R(M_j,r_j)$ for every $j$. Scopes may repeat variables, and different constraints may use different NFAs. We use the size parameter
\begin{equation}\label{eq:size}
 N=|V|+\sum_j\bigl(1+r_j+|Q_j|+|\Delta_j|\bigr).
\end{equation}
All polynomial-time assertions are with respect to ordinary explicit encodings of these data. The arity is given by the length of the scope. It is not a binary-encoded instruction to introduce exponentially many positions.

For a fixed ordering of $V$, let $\Sol(\mathcal I)\subseteq D^{|V|}$ denote the solution relation of an instance $\mathcal I$. For an ordered list $\mathbf b=(b_1,\ldots,b_k)\in V^k$, define
\begin{equation}\label{eq:boundarysol}
 \Sol_{\mathbf b}(\mathcal I)
 =\{(h(b_1),\ldots,h(b_k)):h\in\Sol(\mathcal I)\}.
\end{equation}
The list $\mathbf b$ may be empty and may contain repetitions. There are two nullary relations, $\varnothing$ and $\{()\}$. We store them as distinct flags throughout.

\begin{defi}\label{def:fork}
For $R\subseteq D^r$ with $r\geq1$, a \emph{fork} $(i,a,b)\in[r]\times D^2$ is witnessed by $u,v\in R$ if
\[
 u_{<i}=v_{<i},\qquad u_i=a,\qquad v_i=b.
\]
The set of all forks of $R$ is denoted by $\Sig(R)$. Diagonal forks $(i,a,a)$ are included.
\end{defi}

The following lemma supplies the first part of the compilation. It does not require preservation by any operation.

\begin{lem}[Fork extraction]\label{lem:forkcompiler}
Given an NFA $M$ over $D$ and $r\geq1$, one can compute in polynomial time a set $F\subseteq R(M,r)$ satisfying
\[
 \Sig(F)=\Sig(R(M,r)),\qquad |F|\leq 2rd^2.
\]
Witness pairs for every realized fork can be returned with $F$.
\end{lem}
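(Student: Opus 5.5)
We decide each candidate fork $(i,a,b)\in[r]\times D^2$ separately by a reachability computation, and for each realized fork we extract one witness pair and put both words into $F$. This gives at most $2$ words per fork, so at most $2rd^2$ words in total. The identity $\Sig(F)=\Sig(R(M,r))$ then holds in both directions. Since $F\subseteq R(M,r)$, every fork of $F$ is witnessed by a pair of words of $R(M,r)$. Conversely, every fork of $R(M,r)$ has its chosen witnesses in $F$.

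The preprocessing is layered reachability. Write $M=(Q,Q_0,\Delta,Q_f)$. For $0\le j\le r$, let $\mathrm{Co}_j\subseteq Q$ be the set of states $q$ from which some accepting state is reachable by a path of length exactly $j$. Set $\mathrm{Co}_0=Q_f$, and let $q\in\mathrm{Co}_{j+1}$ iff $(q,c,q')\in\Delta$ for some $c\in D$ and some $q'\in\mathrm{Co}_j$. Record one such transition as a back-pointer.

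Next we compute the common-prefix pairs using the synchronous product $M\times M$. Let $P_0=Q_0\times Q_0$. Put $(p',q')\in P_{m+1}$ iff there are $(p,q)\in P_m$ and $c\in D$ with $(p,c,p'),(q,c,q')\in\Delta$, again storing back-pointers. Then $(p,q)\in P_{i-1}$ means exactly that some word $w\in D^{i-1}$ labels a path from $Q_0$ to $p$ and also a path from $Q_0$ to $q$. These are two runs on the same word, which need not be the same run. Both tables have $O(r|Q|^2)$ entries, and each is filled in time polynomial in $r,|Q|,|\Delta|,d$.

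The fork test is the following. The fork $(i,a,b)$ is realized in $R(M,r)$ iff there exist $(p,q)\in P_{i-1}$ and transitions $(p,a,p'),(q,b,q')\in\Delta$ with $p',q'\in\mathrm{Co}_{r-i}$.

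For the forward direction, suppose $u,v\in R(M,r)$ witness the fork. Take accepting runs of $u$ and $v$. The states these runs reach after the common prefix $u_{<i}=v_{<i}$ form a pair in $P_{i-1}$. The next transitions read $a$ and $b$, and the remaining parts of the two runs show that the resulting states lie in $\mathrm{Co}_{r-i}$.

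For the converse, we build the witnesses. Follow the back-pointers of the pair $(p,q)$ to recover the common prefix $w$. Append $a$ and then a suffix of length $r-i$ obtained from the back-pointers of $p'$ to get $u$. Append $b$ and an independent suffix from $q'$ to get $v$. Both words have length $r$ and are accepted, so they lie in $R(M,r)$, and they witness $(i,a,b)$. The key point is that the suffixes are chosen independently of each other, while the prefix is forced to be common. The product handles the prefix and the single-copy table $\mathrm{Co}$ handles the suffixes. The diagonal case $a=b$ needs no special treatment.

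The main thing to check is the forward direction of the fork test when $M$ is nondeterministic. A witness pair may use different runs on the common prefix, and this is exactly why we take the product of two copies of $M$ rather than tracking a single run. With that in place, the total work is a polynomial number of table lookups over $rd^2$ forks, and the stored witness pairs give the required output.
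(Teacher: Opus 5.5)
Your proposal is correct and takes essentially the same approach as the paper. Both use layered reachability in the synchronous square of $M$ for common prefixes, a single-copy table of exact-length accepting suffixes, the same fork test, and one retained witness pair per realized fork to obtain the $2rd^2$ bound.
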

\begin{proof}
For each $t\in\{0,\ldots,r\}$, compute the set $P_t\subseteq Q^2$ of pairs reachable from $Q_0^2$ by a common word of length $t$. Thus $P_0=Q_0^2$, and $(u,u')\in P_{t+1}$ exactly when some $(q,q')\in P_t$ and $a\in D$ satisfy
\[
 (q,a,u)\in\Delta,\qquad(q',a,u')\in\Delta.
\]
Store a common prefix witnessing each reachable pair. This is reachability in a layered synchronous square of $M$.

Independently, for every $q\in Q$ and $t\leq r$, compute whether an accepting path of length $t$ starts at $q$. Store one corresponding suffix when it exists. The base case is membership in $Q_f$, and the recurrence follows the transitions backwards.

A fork $(i,a,b)$ is realized precisely when there are states $q,q',u,u'$ such that
\begin{equation}\label{eq:forktest}
 \begin{gathered}
 (q,q')\in P_{i-1},\qquad
 (q,a,u),(q',b,u')\in\Delta,\\
 u\text{ and }u'\text{ each admit an accepting suffix of length }r-i.
 \end{gathered}
\end{equation}
If these conditions hold, concatenate the stored common prefix, the respective letters, and the two stored suffixes. The resulting words are accepted and witness the fork. Conversely, accepting paths for any witnessing pair yield the states in~\eqref{eq:forktest}.

Keep one pair for each realized fork. There are at most $rd^2$ such forks, so at most $2rd^2$ words are retained. Since all retained words belong to $R(M,r)$, no additional fork outside its signature is introduced.

For completeness, let $s=|Q|$ and $a=|\Delta|$. There are $O(rs^2)$ layered state pairs. One can scan pairs of transitions at each layer and retain explicit words for the witnesses. A conservative bound for the resulting construction is $O(r^2(s+a+d)^2)$ elementary operations. Only polynomial space is required. NFA ambiguity affects the choices of paths, but the output consists of words, so repeated accepting paths do not create additional relation elements.
\end{proof}

\begin{lem}[Projection witnesses]\label{lem:pincompiler}
Given $M$, $r\geq0$, a set $J\subseteq[r]$, and $a\in D^J$, one can decide in polynomial time whether $a\in\pr_J(R(M,r))$ and return an accepted word witnessing membership. For each fixed integer $c$, witnesses for all realized projections onto at most $c$ coordinates can be computed in polynomial time.
\end{lem}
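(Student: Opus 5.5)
The plan is to reduce membership in a projection to reachability in a layered graph, much as in the proof of Lemma~\ref{lem:forkcompiler}, except that we now track a single copy of $M$ instead of the synchronous square. Build layers $0,\ldots,r$, each a copy of $Q$. For position $i\in[r]$, the allowed letters are
\[
 A_i=\{a_i\}\ \text{ if } i\in J,\qquad A_i=D\ \text{ if } i\notin J.
\]
Put an edge from $(i-1,q)$ to $(i,q')$ whenever $(q,x,q')\in\Delta$ for some $x\in A_i$, and label the edge with one such letter $x$.

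Every path from some $(0,q_0)$ with $q_0\in Q_0$ to some $(r,q_f)$ with $q_f\in Q_f$ spells an accepted word $w\in D^r$ with $\pr_J(w)=a$. Conversely, any accepting path of a word $w\in R(M,r)$ with $\pr_J(w)=a$ traces such a path, since at each position the letter read lies in $A_i$. Hence $a\in\pr_J(R(M,r))$ exactly when a final-layer accepting vertex is reachable. We decide this by forward reachability, storing one predecessor pointer per reached vertex. When the answer is positive, following the pointers back from a reached accepting vertex and reading off the edge labels gives the witness word.

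The running time is dominated by scanning $\Delta$ once per layer, which is $O(r(|Q|+|\Delta|))$, plus $O(r)$ to output the word. This is polynomial, and no ambiguity issues arise because the output is a single word.

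The edge case $r=0$ needs separate treatment. Then $J=\varnothing$ and $a=()$, and $a$ belongs to the projection exactly when $\varepsilon\in L(M)$, which for an epsilon-free NFA means $Q_0\cap Q_f\neq\varnothing$. The same criterion covers the empty-projection case for any $r$: when $J=\varnothing$, the projection is $\{()\}$ if $R(M,r)\neq\varnothing$ and $\varnothing$ otherwise. This keeps the two nullary relations distinct, as required.

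For the second assertion, fix $c$. The number of pairs $(J,a)$ with $|J|\le c$ and $a\in D^J$ is
\[
 \sum_{j\le c}\binom{r}{j}d^j=O\bigl((rd)^c\bigr),
\]
which is polynomial for fixed $c$, even if $d$ is part of the input. Running the first procedure on each pair yields all realized projections onto at most $c$ coordinates, together with a witness for each one.

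I expect no step to be a genuine obstacle. The only points needing care are the correctness of the letter restriction in both directions and the bookkeeping of the nullary and $r=0$ cases. Both are handled above.
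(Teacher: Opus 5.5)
Your proposal is correct and is essentially the paper's proof. Both use a layered copy of $M$ in which transitions at pinned positions $i\in J$ are restricted to the letter $a_i$, decide membership by reachability from initial states in layer $0$ to final states in layer $r$ with path reconstruction in $O(r(|Q|+|\Delta|))$ time, and run $\sum_{t\le c}\binom{r}{t}d^t$ such tests for the second assertion; your separate treatment of $r=0$ and $J=\varnothing$ is harmless, since the same layered reachability already covers those cases.
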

\begin{proof}
Use the layered graph with vertices $(i,q)$ for $0\leq i\leq r$ and $q\in Q$. At a prescribed position $i\in J$, retain only transitions with label $a_i$. At all other positions, retain every transition. Reachability from layer-zero initial states to layer-$r$ final states decides the question and supplies a witnessing word. The number of tests for all projections onto at most $c$ coordinates is
\[
 \sum_{t=0}^{\min(c,r)}\binom rt d^t,
\]
which is polynomial for fixed $c$. Each individual test takes $O(r(|Q|+|\Delta|))$ time with path reconstruction.
\end{proof}

The two constructions expose different information. Projection witnesses describe what can occur on a small set of positions. Fork witnesses additionally record which values can be exchanged after a common prefix. It is this second piece of information that permits compact generation for \Mal\ operations and, together with the bounded projections, for edge operations.

\section{The Mal'tsev case}\label{sec:maltsev}

A \Mal\ operation on $D$ is a ternary operation $p$ satisfying
\begin{equation}\label{eq:maltsev}
 p(x,y,y)=x,\qquad p(y,y,x)=x.
\end{equation}
In particular, $p$ is idempotent. We write $\clos{F}{p}$ for the closure of a set of tuples $F$ under coordinatewise applications of $p$.

For a $p$-closed relation $R\subseteq D^r$ of positive arity, a \emph{frame} is a set $F\subseteq R$ with $\Sig(F)=\Sig(R)$. A frame with at most $2rd^2$ tuples is a compact representation. Any finite frame can be pruned to this size by choosing a witness pair for each of its forks. The empty relation is represented by the empty frame. At arity zero we use the flags specified in Section~\ref{sec:regular}.

The generation property of frames is classical \cite{BD2006}. We recall its proof because it explains why Lemma~\ref{lem:forkcompiler} gives an exact compilation.

\begin{lem}\label{lem:maltsevgen}
If $R$ is $p$-closed and $F\subseteq R$ is a frame, then $\clos{F}{p}=R$.
\end{lem}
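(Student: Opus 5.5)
The inclusion $\clos{F}{p}\subseteq R$ is immediate, since $F\subseteq R$ and $R$ is $p$-closed. For the reverse inclusion, fix $w\in R$. I will show by induction on $i\in\{1,\ldots,r\}$ that some $z^i\in\clos{F}{p}$ satisfies $\pr_{[i]}(z^i)=\pr_{[i]}(w)$. Taking $i=r$ then gives $w=z^r\in\clos{F}{p}$. If $R$ is empty there is nothing to prove, and at arity zero the flags handle the claim trivially.

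\emph{Base case.} Since $w\in R$, the diagonal fork $(1,w_1,w_1)$ lies in $\Sig(R)=\Sig(F)$. Hence some $u\in F$ has $u_1=w_1$, and we set $z^1=u$.

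\emph{Inductive step.} Suppose $z=z^{i}\in\clos{F}{p}$ agrees with $w$ on $[i]$, and put $a=z_{i+1}$ and $b=w_{i+1}$. Both $z$ and $w$ lie in $R$ and share the prefix of length $i$, so $(i+1,a,b)\in\Sig(R)=\Sig(F)$. Choose witnesses $u,v\in F$ with $u_{\le i}=v_{\le i}$, $u_{i+1}=a$, and $v_{i+1}=b$, and define
\[
 z'=p(z,u,v)\in\clos{F}{p}.
\]
For $j\le i$ we have $u_j=v_j$, so $z'_j=p(z_j,u_j,u_j)=z_j=w_j$ by the first identity in~\eqref{eq:maltsev}. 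At coordinate $i+1$ we get $z'_{i+1}=p(a,a,b)=b=w_{i+1}$ by the second identity. Therefore $z'$ agrees with $w$ on $[i+1]$, and we set $z^{i+1}=z'$.

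The only delicate point is the order of arguments in $p(z,u,v)$, which must be chosen so that the prefix is preserved by $p(x,y,y)=x$ while the new coordinate is corrected by $p(y,y,x)=x$. Everything else is bookkeeping.
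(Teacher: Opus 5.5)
Your proof is correct and is essentially the paper's argument: both correct the target one coordinate at a time via $p(t,a,b)$ with $a,b\in F$ witnessing the fork $(i,t_i,u_i)$, using $p(x,y,y)=x$ to keep the prefix and $p(y,y,x)=x$ to fix the new coordinate. The only cosmetic difference is that the paper starts from an arbitrary $t\in F$ and handles coordinate $1$ like every later step, whereas you seed the induction with a diagonal-fork witness.
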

\begin{proof}
Only the inclusion $R\subseteq\clos{F}{p}$ requires proof. Suppose $R$ is nonempty and has positive arity. The diagonal forks imply that $F$ is nonempty. Fix a target $u\in R$ and start with any $t\in F$. Inductively suppose $t_{<i}=u_{<i}$. Since $(i,t_i,u_i)\in\Sig(R)$, there are $a,b\in F$ with
\[
 a_{<i}=b_{<i},\qquad a_i=t_i,\qquad b_i=u_i.
\]
The tuple $p(t,a,b)$ agrees with $t$ before $i$ by the first identity in~\eqref{eq:maltsev}, and its $i$th coordinate is $u_i$ by the second. After $r$ steps the target has been constructed. Empty and nullary relations are immediate.
\end{proof}

\begin{cor}\label{cor:maltsevcompile}
If $R(M,r)$ is preserved by $p$, Lemma~\ref{lem:forkcompiler} computes a compact representation of it in polynomial time. No call to the operation table is required during the compilation itself.
\end{cor}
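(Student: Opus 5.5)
The plan is to combine Lemma~\ref{lem:forkcompiler} with Lemma~\ref{lem:maltsevgen}, treating the degenerate arities separately. The point is that the output of Lemma~\ref{lem:forkcompiler} already meets the definition of a frame. No closure computation is needed to certify this, because the frame property is a purely combinatorial condition on signatures.

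First I will deal with the trivial cases. If $r=0$, the relation $R(M,0)$ is either $\{()\}$ or $\varnothing$. It is $\{()\}$ exactly when $Q_0\cap Q_f\neq\varnothing$, since the NFA is epsilon-free. This test sets the appropriate flag in linear time.

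For $r\geq1$, I will run Lemma~\ref{lem:forkcompiler} to obtain $F\subseteq R(M,r)$ with $\Sig(F)=\Sig(R(M,r))$ and $|F|\leq 2rd^2$, in polynomial time.

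If $R(M,r)$ is empty, its signature is empty, so $F$ is empty. This is the stipulated representation of the empty relation. The fork test~\eqref{eq:forktest} detects this case automatically, because no diagonal fork at $i=1$ is realized.

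Otherwise, $F$ is a subset of the $p$-closed relation $R(M,r)$ with the same signature, so by definition $F$ is a frame of size at most $2rd^2$. Lemma~\ref{lem:maltsevgen} then gives $\clos{F}{p}=R(M,r)$, so $F$ is a compact representation that determines the relation exactly.

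For the final claim, I will note that the construction in the proof of Lemma~\ref{lem:forkcompiler} uses only $Q$, $Q_0$, $Q_f$, $\Delta$, $D$ and $r$. It consists of layered reachability in the synchronous square, accepting-suffix tables, and concatenation of stored words. The operation $p$ enters only through the hypothesis that $R(M,r)$ is $p$-closed, and that hypothesis is used solely in the correctness argument via Lemma~\ref{lem:maltsevgen}. This is also why the compilation remains polynomial when $D$ and the table of $p$ are part of the input.

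There is no real obstacle. The one point needing care is that the equality $\Sig(F)=\Sig(R)$ is exact rather than a mere inclusion, since this is what the frame definition demands. That equality holds because $F\subseteq R(M,r)$, which forces $\Sig(F)\subseteq\Sig(R)$, and because every realized fork of $R$ has a retained witness pair in $F$, which gives the reverse inclusion.
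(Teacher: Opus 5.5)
Your proposal is correct and follows the same route as the paper. Lemma~\ref{lem:forkcompiler} yields $F\subseteq R(M,r)$ with exactly the signature of the slice and at most $2rd^2$ tuples, so $F$ is a compact frame by definition; Lemma~\ref{lem:maltsevgen} shows it generates the slice, the construction never consults $p$, and your flag-based handling of the nullary and empty cases matches the paper's conventions.
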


We next give the algebraic operations used after compilation. The calculus of compact representations is part of the established \Mal\ CSP machinery. The role of compact intersection is discussed in \cite[Section~1]{BMS2019}, and a direct presentation of the product, identification, and projection calculus appears in \cite[Section~1.8]{Brady2025}. We provide the required special cases with their polynomial dependence on the domain size.

\begin{lem}[Small projections]\label{lem:smallproj}
Let $F$ frame $R\subseteq D^n$. For a fixed number $c$ of selected coordinates, their projection of $R$ and one full tuple witness for each projected tuple can be computed in time polynomial in $n$, $|F|$, and $d$, given the table of $p$.
\end{lem}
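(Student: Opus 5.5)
Given a frame $F$ of $R\subseteq D^n$ and a set $J$ of at most $c$ coordinates, the target is $\pr_J(R)$, together with a witness in $R$ for each projected tuple. The plan is to reorder coordinates so that $J$ comes first, build a frame for the reordered relation, and read off its first $|J|$ coordinates by Lemma~\ref{lem:maltsevgen}.

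Permuting coordinates preserves $p$-closure but destroys frames, since forks depend on the coordinate order. So the core step is to build a frame of a coordinate-permuted copy of $R$ from $F$. Write $J=\{j_1<\dots<j_m\}$ with $m\le c$. I would instead run a direct search that avoids rebuilding a full frame, in two steps.

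\textbf{Step 1: enumerate candidates.} Go through all $d^m$ tuples $a\in D^J$; this is polynomial since $c$ is fixed. For each $a$, decide whether $a\in\pr_J(R)$ and, if so, produce a witness $u\in R=\clos{F}{p}$ with $u_{j_s}=a_s$ for all $s$.

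\textbf{Step 2: fix coordinates one at a time.} Start with $R_0=R$ and impose $u_{j_1}=a_1$, then $u_{j_2}=a_2$, and so on. At each step, maintain a frame of $R_s=R\cap\{u:u_{j_1}=a_1,\dots,u_{j_s}=a_s\}$. This is intersection with a unary singleton constraint, which is $p$-closed because $p$ is idempotent.

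The standard Bulatov--Dalmau fix-value routine updates a frame under $u_j=b$. For each fork $(i,x,y)$ with $i<j$, it needs tuples of $R_s$ that agree before $i$ and carry the prescribed values, now subject to $u_j=b$. For $i>j$, forks of the restricted relation are simply forks of $R_s$ whose witness pairs both carry $b$ at $j$. For $i=j$, only the diagonal fork $(j,b,b)$ survives.

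To find such pairs from the old frame, I would use the Mal'tsev trick. Given a tuple $t\in R_s$ with $t_j=b$ and a witness pair $v,w$ with $v_{<i}=w_{<i}$, form $p(t,v,w)$ and similar terms; these agree before $i$ with $t$ wherever $v$ and $w$ agree. In practice I would follow the routine as written: process coordinates $i$ from left to right, and use the first identity of~\eqref{eq:maltsev} to transport the constraint at $j$ through pairs that agree at $j$. Each update makes $O(n d^2)$ fork checks with constantly many $p$-applications per check. So the time is polynomial in $n$, $|F|$ and $d$, and uniform in the table of $p$. The frame is kept pruned to $2nd^2$ tuples after each step.

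\textbf{Decide and return.} After $m$ restrictions, $a\in\pr_J(R)$ exactly when the final frame is nonempty (diagonal forks force nonemptiness). Any element of that frame is the required full witness. Repeating over all $a\in D^J$ gives the full projection with $d^m\cdot m\cdot\mathrm{poly}(n,|F|,d)$ work.

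The main obstacle is the correctness of the single fix-value update: showing that when a fork $(i,x,y)$ of $R_s\cap\{u_j=b\}$ with $i<j$ exists, the left-to-right Mal'tsev combination actually finds it using only frame elements. I would first try the inductive argument of Lemma~\ref{lem:maltsevgen}. Given a genuine witness pair $u,u'$ in the restricted relation, it rebuilds each from the frame by a sequence of $p$-steps. One then checks that the routine's greedy choices reproduce some pair with the same prefix agreement and value $b$ at $j$. This reduces to the fact that $p(t,a,b)$ with $a_j=b_j$ keeps $t_j$.
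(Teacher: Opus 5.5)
\textbf{There is a genuine gap,} exactly where you place the ``main obstacle'': the single-coordinate update for a non-prefix coordinate $j$ is never established, and it cannot be obtained the way you suggest.

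\emph{Forks at $i<j$.} To witness a fork $(i,x,y)$ of $R_s\cap\{u:u_j=b\}$ you need two tuples of $R_s$ that agree before $i$, have values $x,y$ at $i$, and have $b$ at $j$. Lemma~\ref{lem:maltsevgen} rebuilds a \emph{known} target tuple from $F$. It gives no way to decide whether such a target exists. A greedy left-to-right sequence of $p$-steps has no reason to land on $b$ at $j$ without search.

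\emph{Forks at $i>j$.} Even this easier case needs a tuple $t\in R_s$ with both $t_j=b$ and $t_i=x$. Otherwise $(t,p(t,v,w))$ witnesses $(i,t_i,p(t_i,x,y))$ rather than $(i,x,y)$. Finding such a $t$ is a binary projection query.

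\emph{Circularity.} In the Bulatov--Dalmau routines, and in this paper's Lemmas~\ref{lem:prefix} and~\ref{lem:equality}, these fixings are implemented by calling the small-projection lemma itself. For $i<j$ one additionally fixes the prefix $t_{<i}$ and then runs a second projection search. So your reduction runs in the wrong direction: the subroutine you want to use presupposes the statement you are proving.

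\textbf{The missing idea is simpler,} and your opening sentence was one step from it. Since $p$ acts coordinatewise, projection commutes with generation, and Lemma~\ref{lem:maltsevgen} gives
\[
 \pr_J(R)=\pr_J\bigl(\clos{F}{p}\bigr)=\clos{\pr_J(F)}{p}.
\]
No frame of a reordered relation is needed. The procedure is:
\begin{enumerate}
\item Close the set $\pr_J(F)\subseteq D^{|J|}$ under $p$, storing one full tuple of $R$ for each projected tuple.
\item Whenever a triple of projected tuples yields a new one, apply $p$ to their three stored witnesses. The result stays in $R$ because $R$ is $p$-closed.
\end{enumerate}
There are at most $d^c$ projected tuples. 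Hence there are at most $d^c$ rounds, each over at most $d^{3c}$ triples, and each new witness costs $O(n)$ table look-ups. This is polynomial in $n$, $|F|$ and $d$ for fixed $c$, and it uses the table of $p$ directly.
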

\begin{proof}
Begin with the projected tuples from $F$ and close this set under $p$. Retain a full tuple witness for each projected tuple. A new projected tuple is witnessed by applying $p$ to the full witnesses of the three tuples that produced it. All witnesses therefore remain in $R$. Projection commutes with coordinatewise generation, so Lemma~\ref{lem:maltsevgen} proves that the resulting set is the desired projection. There are at most $d^c$ projected tuples. Closing this bounded power under a table-given ternary operation is polynomial for fixed $c$.
\end{proof}

\begin{lem}[Prefix restriction]\label{lem:prefix}
Given a frame for $R\subseteq D^n$ and $a\in D^j$, one can compute a frame for
\[
 R_a=\{t\in R:t_1=a_1,\ldots,t_j=a_j\}
\]
in time polynomial in $n$, $d$, and the input frame size.
\end{lem}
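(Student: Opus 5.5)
The plan is to proceed by induction on $j$, restricting one coordinate at a time. It therefore suffices to show the following: given a frame $F$ for a $p$-closed relation $R$ and a value $b\in D$ at coordinate $m$, where $R$ is already constant on coordinates $1,\ldots,m-1$, one can compute a frame for $R'=\{t\in R:t_m=b\}$. Note that $R'$ is $p$-closed, since $p$ is idempotent and singletons are $p$-closed. Every step must keep the frame size at most $2nd^2$, so repeated restriction does not blow up; we prune after each step as described after the definition of frames.

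First we test whether $R'$ is nonempty and obtain a witness. By Lemma~\ref{lem:smallproj} with $c=1$, we compute $\pr_m(R)$ with full witnesses. If $b\notin\pr_m(R)$, we return the empty frame. Otherwise we get some $w\in R'$.

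Next we build the forks of $R'$. For $i<m$, the coordinates before $i$ are constant in $R$ (and in $R'$). Coordinate $i$ is also constant, so the only fork at $i$ is the diagonal one $(i,a_i,a_i)$, witnessed by $(w,w)$. For $i=m$, likewise only $(m,b,b)$ occurs, again witnessed by $w$.

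For $i>m$, take $(i,x,y)\in\Sig(R)$ with witnesses $u,v\in F$. The claim to verify is that $(i,x,y)\in\Sig(R')$ if and only if it is witnessed in $R$ by a pair whose common prefix has $m$-th entry $b$. To handle this, we use Lemma~\ref{lem:smallproj} on the two coordinates $(m,i)$. This yields witnesses $s\in R'$ with $s_i=x$ for each realized $x$, that is, pairs $(b,x)\in\pr_{m,i}(R)$.

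Given $s\in R'$ with $s_i=x$ and a fork $(i,x,y)$ of $R$ witnessed by $u,v$, we form $p(s,u,v)$. By the Mal'tsev identities it agrees with $s$ before $i$, so it lies in $R'$ and has $i$-th coordinate $y$. The pair $s,\,p(s,u,v)$ then witnesses $(i,x,y)$ in $R'$.

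This is the main obstacle: showing that every fork of $R'$ arises this way. If $(i,x,y)\in\Sig(R')$, then it lies in $\Sig(R)$, because $R'\subseteq R$ and the same witnesses serve. Its witnesses also give an element of $R'$ with $i$-th entry $x$. So both ingredients exist, and the construction above produces witnesses in $R'$. Conversely, every pair we construct lies in $R'$, so the resulting set $F'$ has $\Sig(F')=\Sig(R')$ exactly.

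The running time is polynomial. There are at most $nd^2$ forks, each handled with one application of $p$ on $n$ coordinates. There are also $O(n)$ calls to Lemma~\ref{lem:smallproj} with $c=2$. Pruning after each step keeps $|F'|\le 2nd^2$, and iterating over the $j\le n$ prefix coordinates gives a total that is polynomial in $n$, $d$, and $|F|$.
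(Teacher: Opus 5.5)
Your proof is correct and follows essentially the same route as the paper: restrict one coordinate at a time, keep an anchor for the forced diagonal forks at positions up to the restricted one, and for each fork $(i,x,y)$ beyond it use a binary call to Lemma~\ref{lem:smallproj} to find $s$ with the prescribed values and retain $s$ and $p(s,u,v)$, with the converse coming from the first witness of a fork of the restricted relation. The only cosmetic difference is that you test nonemptiness with a unary projection call, whereas the paper reads it directly off the diagonal forks of the frame.
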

\begin{proof}
Fix the coordinates successively. Suppose a frame $F$ for a relation $S$ already fixes positions before $j$, and position $j$ is to be fixed to $v$. If $S$ has no tuple with value $v$ there, the answer is empty. This is detected by inspecting $F$, since the diagonal forks ensure that $\pr_{\{j\}}(F)=\pr_{\{j\}}(S)$.

Otherwise retain an anchor $z\in F$ with $z_j=v$. For each $(i,a,b)\in\Sig(S)$ with $i>j$, choose witnesses $u,w\in F$. By Lemma~\ref{lem:smallproj}, find a tuple $t\in S$ with $t_j=v$ and $t_i=a$, if one exists. In that case retain
\begin{equation}\label{eq:prefixpair}
 t,\qquad p(t,u,w).
\end{equation}
They agree before $i$. At position $j$, the two original witnesses agree because $j<i$, so the second tuple also has value $v$. At position $i$, its value is $p(a,a,b)=b$. Thus~\eqref{eq:prefixpair} witnesses the required fork in the restricted relation.

Conversely, a fork of the restricted relation at a position $i>j$ is a fork of $S$, and its first witness supplies a tuple with values $v$ at $j$ and $a$ at $i$. The search therefore succeeds for every such fork. Positions at most $j$ have only their forced diagonal forks, all witnessed by the anchor. Pruning gives the next frame. Iterate this step. Only binary projections are used, and the number of steps and forks is polynomial.
\end{proof}

\begin{lem}[Equality update]\label{lem:equality}
Given a frame for $R\subseteq D^n$ and coordinates $\alpha,\beta\in[n]$, one can compute a frame for
\[
 R'=\{t\in R:t_\alpha=t_\beta\}
\]
in time polynomial in $n$, $d$, and the input frame size.
\end{lem}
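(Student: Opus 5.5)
We reduce the equality update to the prefix-restriction technique by finding, for each fork of $R'$, a witness pair inside $R'$ using small projections and the \Mal\ operation. We may assume $\alpha<\beta$; the case $\alpha=\beta$ is trivial. First, $R'$ is $p$-closed, being the intersection of $R$ with an equality relation, so Lemma~\ref{lem:maltsevgen} applies to any frame we produce. We must realize $\Sig(R')$ by tuples of $R'$, and we will treat forks $(i,a,b)$ according to the position of $i$ relative to $\alpha$ and $\beta$.

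\emph{Forks with $i\geq\beta$, or with $i\le\alpha$.} For $i>\beta$, both witnesses agree on positions $\alpha$ and $\beta$. Hence a fork $(i,a,b)$ of $R'$ is a fork of $R$ whose first witness lies in $R'$. Conversely, given any $t\in R'$ with $t_i=a$, together with witnesses $u,w\in F$ of $(i,a,b)$ in $R$, the tuple $p(t,u,w)$ agrees with $t$ before $i$ (since $u_{<i}=w_{<i}$) and has value $b$ at $i$. It is therefore in $R$, agrees with $t$ at $\alpha$ and $\beta$, and so lies in $R'$. The needed $t$ is found by Lemma~\ref{lem:smallproj} applied to the three coordinates $\alpha,\beta,i$, looking for projected tuples $(c,c,a)$ and returning the full witness. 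This is exactly the argument of Lemma~\ref{lem:prefix} with the equality replacing the fixed value. The same construction works for $i\le\alpha$: there $u,w$ may differ at $\alpha$ and $\beta$, so $p(t,u,w)$ need not stay in $R'$, and these cases need the fallback described next. The case $i=\beta$ is only diagonal, since the prefix already fixes position $\alpha$.

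\emph{Forks with $i\leq\alpha$ or $\alpha<i<\beta$.} These are the main obstacle, because the fork witnesses of $R$ do not control coordinate $\beta$, which lies after $i$. My plan is to avoid the issue by reordering. Rather than witnessing these forks directly, we use a fork-free criterion: $(i,a,b)\in\Sig(R')$ if and only if some $c$ and prefix $x\in D^{i-1}$ admit extensions in $R'$ with value $a$ and with value $b$ at $i$. We handle this by iterating over positions. We first compute a frame for $R'$ restricted to coordinates $\le\beta$ and proceed by induction. Concretely, we use the standard trick from \cite{BD2006}: process coordinates in increasing order, maintaining a frame $F_i$ for the projection of $R'$ onto $[i]$. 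The projection onto the initial segment $[\beta]$ is handled by moving $\beta$ next to $\alpha$. We can do this because the lemmas can be applied to $R$ with coordinate order permuted: a frame for a permuted order is obtained by the generation procedure of Lemma~\ref{lem:maltsevgen} restricted through Lemma~\ref{lem:smallproj}. With $\beta$ placed immediately after $\alpha$, every fork position lies either before both or after both, and the cases above apply. Finally, we must return a frame in the original order. For each fork $(i,a,b)$ we search for witnesses by fixing, via Lemma~\ref{lem:prefix}-style restrictions on the permuted frame, the prefix common to a candidate pair. Each step uses only projections onto a bounded number of coordinates, so the total cost is polynomial in $n$, $d$, and $|F|$. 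The delicate point I expect to verify carefully is that these reorderings preserve polynomial frame size and that the per-fork searches find a witness pair whenever the fork exists in $R'$.
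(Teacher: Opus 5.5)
\textbf{There is a genuine gap in the cases $i<\beta$, which is where all the difficulty of this lemma lies.} Your treatment of forks at positions $i>\beta$ is correct, and so is your observation that position $\beta$ carries only diagonal forks. The reordering plan for the remaining cases does not work, for three reasons.

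First, you give no valid way to obtain a frame of $R$ in a permuted coordinate order. Lemma~\ref{lem:smallproj} only yields projections onto a bounded number of coordinates, and generating $R$ through Lemma~\ref{lem:maltsevgen} takes exponential time. In this paper, permutations of coordinates are themselves obtained from the equality update (Lemma~\ref{lem:maltsevcalculus}), so invoking them here is circular.

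Second, even if the reordering were granted, placing $\beta$ immediately after $\alpha$ still leaves the forks at $i<\alpha$. You flagged exactly these as unresolved, because the witnesses stored in $F$ may disagree at $\alpha$ and $\beta$.

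Third, converting back to the original order requires fixing values on coordinate sets that are not prefixes of the permuted order, and Lemma~\ref{lem:prefix} does not provide this. Your closing remark defers the claim that the per-fork searches find a witness pair whenever the fork exists. That claim is the entire content of the lemma. Your ``fork-free criterion'' is only the definition of a fork, and it ranges over $d^{i-1}$ candidate prefixes.

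\textbf{The missing idea is to apply the Mal'tsev argument to hypothetical witnesses in $R'$, not to the witnesses stored in $F$.} For every $(i,a,b)\in\Sig(R)$, wherever $i$ lies, proceed as follows.
\begin{enumerate}
\item Find any $t\in R'$ with $t_i=a$, using a projection onto $\{\alpha,\beta,i\}$.
\item Use Lemma~\ref{lem:prefix} to restrict $R$ to the tuples with prefix $t_{<i}$.
\item Search the projection of this restricted relation onto $\{\alpha,\beta,i\}$ for a tuple $u$ with $u_\alpha=u_\beta$ and $u_i=b$.
\end{enumerate}
Suppose $v,w\in R'$ witness $(i,a,b)$. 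Then $p(t,v,w)$ lies in $R'$, because $t,v,w\in R'$ and $R'$ is $p$-closed. It agrees with $t$ before $i$, and its value at $i$ is $p(a,a,b)=b$. So the second search succeeds for every choice of $t$. The prefix of the first tuple found therefore always suffices, and neither a search over prefixes nor any reordering is needed.
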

\begin{proof}
For every $(i,a,b)\in\Sig(R)$, use a projection on the at most three coordinates $\{\alpha,\beta,i\}$ to search for $t\in R'$ with $t_i=a$. If there is no such tuple, this fork cannot survive. If one is found, apply Lemma~\ref{lem:prefix} to fix $t_{<i}$ in $R$. Search the same small projection of this restricted relation for a tuple $u$ satisfying
\[
 u_\alpha=u_\beta,\qquad u_i=b.
\]
Retain $t,u$ when the second search succeeds.

Every retained pair lies in $R'$ and witnesses the indicated fork. To see that no fork of $R'$ is missed, let $v,w\in R'$ witness $(i,a,b)$. For any first tuple $t$ found by the algorithm, the tuple $p(t,v,w)$ lies in $R'$ because equality is preserved by $p$. Before position $i$ it agrees with $t$, and at position $i$ it equals $p(a,a,b)=b$. Hence the second search succeeds regardless of the choice of $t$. The retained pairs have exactly the signature of $R'$. Prune them to obtain a frame. There are at most $nd^2$ searches, and the preceding lemmas give polynomial running time.
\end{proof}

\begin{lem}[Products and projections]\label{lem:maltsevcalculus}
Cartesian product and arbitrary ordered projection, including coordinate repetitions, can be performed on frames in polynomial time.
\end{lem}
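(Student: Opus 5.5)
For the Cartesian product, let $F$ frame $R\subseteq D^n$ and $G$ frame $S\subseteq D^m$. If either frame is empty, the product is empty. Nullary cases are handled by the flags: $\{()\}$ is the unit for the product and $\varnothing$ is absorbing. Otherwise fix anchors $g_0\in G$ and $f_0\in F$.

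A fork $(i,a,b)$ of $R\times S$ with $i\leq n$ is witnessed by $(u,g_0),(w,g_0)$, where $u,w\in F$ witness $(i,a,b)$ in $R$. A fork with $i=n+j>n$ requires a common prefix on the first block, so it is witnessed by $(f_0,u'),(f_0,w')$, where $u',w'\in G$ witness $(j,a,b)$ in $S$. Conversely, every fork of the product restricts to a fork of $R$ or of $S$ in this way. Hence $\Sig$ is preserved exactly. The new frame has at most $2(n+m)d^2$ tuples after pruning, and the construction takes linear time in the output.

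For an ordered projection onto a list $\mathbf c=(c_1,\ldots,c_k)\in[n]^k$ with possible repetitions, reduce to two steps. First handle an increasing list of distinct coordinates. Then realize repetitions and reorderings with a product and equality updates.

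Repetitions and reorderings are handled as follows. Form $R\times D^k$; the full relation $D^k$ has an explicit frame, namely pairs of tuples differing only at one position. Apply Lemma~\ref{lem:equality} $k$ times to impose $t_{n+s}=t_{c_s}$. The relation obtained is in bijection with $R$ via its first $n$ coordinates. Its projection onto the last $k$ coordinates is exactly $\pr_{\mathbf c}(R)$, and those coordinates form a terminal block. It therefore suffices to project a frame onto an arbitrary set of coordinates, and by permuting the block in the previous step we may take this to be a terminal block $\{n+1,\ldots,n+k\}$.

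The key step is projecting onto a terminal block, which I expect to be the main obstacle, since deleting initial coordinates destroys the common-prefix structure of forks. Rather than trying to reuse old witnesses, recompute forks of the projection $P$ directly. For each target fork $(i,a,b)$ of $P$ (position $i$ within the block), first search via Lemma~\ref{lem:smallproj} for any $t$ in the current relation $T$ with $t_{n+i}=a$. Then apply Lemma~\ref{lem:prefix} to fix the block prefix of $t$, after first permuting coordinates so that positions $n+1,\ldots,n+i-1$ come first. Permuting coordinates is itself just a relabeling, but frames are order-sensitive, so I would avoid it as follows: order the product as $D^k\times R$ from the start, so that the retained block is an initial segment. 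Projection onto an initial segment is trivial, because forks at positions $\leq k$ only involve prefixes inside the segment, so truncating a frame gives a frame.

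With this ordering the equality-update construction yields an initial segment directly, and the remaining argument is just that truncation. The running time consists of $k$ equality updates on arity $n+k$, each polynomial by Lemma~\ref{lem:equality}, followed by pruning.
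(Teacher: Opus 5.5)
Your final argument is correct and is essentially the paper's proof: the anchored product $(F\times\{g_0\})\cup(\{f_0\}\times G)$, then placing $D^k$ in front of $R$, imposing the $k$ equalities with Lemma~\ref{lem:equality}, and truncating to the initial block. The intermediate detour is unnecessary and can be deleted, because the $D^k\times R$ ordering handles every ordered list, including repetitions, uniformly; this covers the preliminary pass over increasing distinct lists, the terminal-block version, and the fork recomputation via Lemmas~\ref{lem:smallproj} and~\ref{lem:prefix}.
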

\begin{proof}
Let $F,G$ frame nonempty relations $R\subseteq D^n$ and $S\subseteq D^m$. Choose anchors $a\in F$ and $b\in G$. In the concatenated coordinate order,
\begin{equation}\label{eq:anchored}
 (F\times\{b\})\ \cup\ (\{a\}\times G)
\end{equation}
frames $R\times S$. Forks in the first block come from $R$, while forks in the second block come from $S$ with the first block held fixed. Empty and nullary factors are handled directly.

Projection of a frame to a prefix of its coordinates gives a frame for that projected relation. Indeed, any fork in the projected relation lifts to two original tuples that agree on the same earlier coordinates. Its witnesses in the original frame project to witnesses of the fork.

For a general ordered list $(i_1,\ldots,i_k)$, adjoin $k$ new coordinates before the old coordinates, initially ranging over $D^k$. Impose $z_j=x_{i_j}$ for each $j$ using Lemma~\ref{lem:equality}, and project to the first $k$ coordinates. This also permits repetitions and permutations. A frame for $D^k$ consists of a fixed base tuple and the tuples obtained by changing at most one coordinate. Thus all required initial representations are polynomial in size.
\end{proof}

\begin{thm}[Automatic \Mal\ constraints]\label{thm:maltsev}
Suppose a common \Mal\ operation $p$ preserves all slices $R(M_j,r_j)$ in an automatic constraint instance $\mathcal I$. One can decide satisfiability and compute a frame for $\Sol(\mathcal I)$ in polynomial time. For any explicit ordered boundary list $\mathbf b$, one can also compute a frame for $\Sol_{\mathbf b}(\mathcal I)$ in polynomial time. The theorem holds both for fixed $D,p$ and when $D$ and the full table of $p$ are supplied as input, under the preservation promise.
\end{thm}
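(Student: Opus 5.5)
The plan is to build a frame for $\Sol(\mathcal I)$ by starting from the full power $D^{|V|}$ and folding in the constraints one at a time, using only the calculus of Lemmas~\ref{lem:prefix}--\ref{lem:maltsevcalculus}. Fix an ordering $x_1,\ldots,x_n$ of $V$. A frame for $D^n$ consists of a base tuple together with all single-coordinate modifications of it, giving $O(nd)$ tuples.

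For the constraint $(\mathbf v_j,M_j)$ I first apply Corollary~\ref{cor:maltsevcompile}. By the preservation promise, Lemma~\ref{lem:forkcompiler} returns a frame $F_j$ for $R_j=R(M_j,r_j)$ of size at most $2r_jd^2$. It is computed without consulting $p$ and in time polynomial in $r_j,|Q_j|,|\Delta_j|,d$. If $R_j$ is empty, which the diagonal-fork test detects, the instance is unsatisfiable and we output the empty frame.

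Otherwise, suppose we hold a frame $G$ for the current relation $S\subseteq D^n$, which is the set of assignments satisfying the constraints processed so far and is $p$-closed. Form the product frame for $S\times R_j\subseteq D^{n+r_j}$ by~\eqref{eq:anchored}. For each position $\ell\le r_j$ with $\mathbf v_j(\ell)=x_m$, impose the equality between coordinate $m$ and coordinate $n+\ell$ using Lemma~\ref{lem:equality}. Repeated variables in the scope need no special treatment, since they simply produce several equalities with the same old coordinate. Finally, project to the first $n$ coordinates, which is a prefix projection and therefore yields a frame with no further work by Lemma~\ref{lem:maltsevcalculus}. The result is exactly $\{h\in S: h(\mathbf v_j)\in R_j\}$.

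Every intermediate relation is an intersection, product, or projection of $p$-closed relations, so it is $p$-closed, and each lemma applies. After pruning, the frame stays below $2(n+r_j)d^2$ tuples, so its size never accumulates across steps. Satisfiability is read off as nonemptiness of the final frame. For a boundary list $\mathbf b$, the ordered projection with repetitions in Lemma~\ref{lem:maltsevcalculus} gives a frame for $\Sol_{\mathbf b}(\mathcal I)$. The nullary cases, where $\mathbf b$ is empty or $V$ is empty, are handled by the two flags.

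The step I expect to need the most care is the semiuniform claim, where $D$ and the table of $p$ are part of the input. Here I must check that every lemma runs in time polynomial in $d$ and not merely polynomial for fixed $d$.

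Lemma~\ref{lem:smallproj} closes a subset of $D^c$ with $c\le3$ under $p$. This involves at most $d^3$ tuples and $O(d^9)$ triples per closure round, with at most $d^3$ rounds, so it is polynomial in $d$ and the input size. Lemma~\ref{lem:prefix} uses only binary projections, and Lemma~\ref{lem:equality} uses ternary projections together with at most $nd^2$ prefix restrictions. Composing these gives a polynomial in $n+r_j$ and $d$ for each constraint, and there are at most $N$ constraints.

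I also have to confirm that no step depends on $p$ beyond the promise. The correctness arguments in Lemmas~\ref{lem:maltsevgen} and~\ref{lem:equality} use only the Mal'tsev identities and the preservation of the relations involved. The compilation step never evaluates $p$ at all. Hence the bound is uniform in $(D,p)$, which gives the second half of the theorem.
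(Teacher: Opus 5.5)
Your proposal is correct and follows the paper's proof essentially step for step. In both, each slice is compiled by fork extraction, and constraints are inserted through an anchored product, binary equality updates and a prefix projection. The boundary is then obtained by an ordered projection. For the semiuniform claim, both observe that all closures take place in powers of dimension at most three, so the dependence on $d$ is polynomial. The paper additionally remarks that the \Mal\ identities can be checked directly from the table, but this is not needed under the theorem's hypothesis.
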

\begin{proof}
Compile each constraint by Corollary~\ref{cor:maltsevcompile}. Start with a frame for $D^n$, where $n=|V|$. Maintain a frame for the solution relation of the constraints already processed.

Suppose the next constraint has relation $S\subseteq D^r$ and scope $(v_{i_1},\ldots,v_{i_r})$. Form a frame for the product of the current relation with $S$, using the coordinate order
\[
 (x_1,\ldots,x_n,y_1,\ldots,y_r).
\]
For each $j\in[r]$, impose $x_{i_j}=y_j$. Project to the first $n$ coordinates. The resulting relation consists exactly of the assignments satisfying the previous constraints and the new one. Each update uses only a binary equality, even when $r$ is unbounded. Repeated variables in the scope simply cause several $y$-coordinates to be identified with one $x$-coordinate.

There are $\sum_j r_j$ equality updates, and the intermediate arity is at most $n+\max_j r_j$. The preceding lemmas are polynomial in this arity and in $d$. The final frame is empty exactly when the instance is unsatisfiable, and every tuple it contains is a solution. The ordered boundary relation is computed by Lemma~\ref{lem:maltsevcalculus}.

For the semiuniform assertion, the Mal'tsev identities can be checked directly from the table. All uses of closure take place in powers of dimension at most three, and their size is bounded by $d^3$. Hence the entire algorithm is polynomial in the instance encoding, the requested boundary length, and the operation-table size. Preservation of the NFA-described slices is the stated promise.
\end{proof}

\begin{cor}\label{cor:automaticquestion}
Let $M$ be a fixed NFA over a finite domain. If a \Mal\ operation preserves every slice $L(M)\cap D^r$, then $\AutCSP(M)$ is in $\Pclass$.
\end{cor}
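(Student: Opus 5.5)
The plan is to reduce the corollary to Theorem~\ref{thm:maltsev}. An instance of $\AutCSP(M)$ consists of a finite variable set $V$ and constraints whose ordered scopes $\mathbf v_j\in V^{r_j}$ all carry the fixed automaton $M$. I read the intended constraint as $h(\mathbf v_j)\in L(M)\cap D^{r_j}=R(M,r_j)$, which matches the model of \cite{Gong2026} with relations obtained by intersecting the regular language with a specified power of the domain. First I check that such an instance is an automatic constraint instance in the sense of Section~\ref{sec:regular}, with every $M_j=M$. Its size parameter~\eqref{eq:size} is then at most $|V|+\sum_j(1+r_j)$ plus a constant multiple of the number of constraints, because $|Q|$ and $|\Delta|$ are constants. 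So $N$ is polynomially equivalent to the length of the ordinary encoding of the $\AutCSP(M)$ instance.

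Next I verify the hypothesis of Theorem~\ref{thm:maltsev}. The assumption says a single \Mal\ operation $p$ preserves every slice $L(M)\cap D^r$. In particular $p$ preserves each $R(M,r_j)$ that occurs, and it is one operation common to all constraints of all instances. Because $D$ and $p$ are fixed, the fixed-domain version of the theorem applies directly. We do not even need to know $p$ explicitly at the compilation stage (Corollary~\ref{cor:maltsevcompile}), but the calculus lemmas do use its table. Since $p$ is a fixed finite object depending only on $M$, the algorithm may hardwire it. This is a nonuniformity only in $M$, which is allowed when proving membership in $\Pclass$ for a fixed problem.

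The theorem then decides satisfiability in time polynomial in $N$, by checking whether the computed frame for $\Sol(\mathcal I)$ is empty. The only point needing care is the degenerate bookkeeping. An instance with no constraints is satisfiable iff $D\neq\varnothing$, which holds. A scope of length zero asks whether the empty word is in $L(M)$, a constant fact. Both cases are covered by the nullary flags of Section~\ref{sec:regular}. I do not expect a real obstacle. The main subtlety is the encoding convention: arities are given by explicit scope lengths, not in binary, so polynomial in $N$ really means polynomial in the input. I would remark that this is the convention of \cite{Gong2026}, which is what makes the corollary answer their Section~9 question.
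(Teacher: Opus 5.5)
Your proposal is correct and follows the paper's proof, which simply applies Theorem~\ref{thm:maltsev} with every $M_j=M$ and observes that this is the automatic constraint model of \cite{Gong2026}. Your extra checks are accurate bookkeeping that the paper leaves implicit: the size parameter with $|Q|,|\Delta|$ constant, hardwiring the fixed table of $p$, and the nullary and empty-instance cases.
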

\begin{proof}
Use $M_j=M$ in Theorem~\ref{thm:maltsev}. This is the automatic constraint model of \cite[Definitions~2.5--2.7]{Gong2026}, so the result answers its \Mal\ tractability question.
\end{proof}

The output is a representation by actual solutions. It is not a list of all solutions, and the algorithm never closes a frame inside the full power $D^n$. The bounded projections in Lemma~\ref{lem:smallproj} are the only explicitly generated subpowers.

\section{Edge operations and few subpowers}\label{sec:edge}

Fix an integer $\ell\geq2$. An \emph{$\ell$-edge operation} is an operation $e:D^{\ell+1}\to D$ satisfying
\begin{equation}\label{eq:edge}
 \begin{aligned}
 e(x,x,y,y,\ldots,y)&=y,\\
 e(x,y,x,y,\ldots,y)&=y,\\
 e(y,\ldots,y,\underset{j}{x},y,\ldots,y)&=y
       &&(4\leq j\leq\ell+1).
 \end{aligned}
\end{equation}
The last family is absent for $\ell=2$. These operations are idempotent. If $p$ is \Mal, then $e(x,y,z)=p(y,x,z)$ is a $2$-edge operation. A near-unanimity operation of arity $\ell$ gives an $\ell$-edge operation by ignoring the first of $\ell+1$ arguments.

A finite algebra has \emph{few subpowers} if the number of subalgebras of its $r$th power is bounded by $2^{P(r)}$ for some polynomial $P$. The existence of an edge term characterizes this property \cite[Theorem~3.4]{IMMVW2010}. We use its constructive representation theorem. Throughout this section, $D$ and $e$ are fixed, so the polynomial exponent may depend on $\ell$.

To state the representation invariant, fix derived term operations $\mu(x,y)$, $p_e(x,y,z)$, and $s(x_1,\ldots,x_\ell)$ supplied by \cite[Lemma~3.5]{IMMVW2010}. They satisfy
\begin{equation}\label{eq:derived}
 \begin{gathered}
 p_e(x,y,y)=x,\qquad p_e(x,x,y)=\mu(x,y),\qquad
 \mu(x,\mu(x,y))=\mu(x,y),\\
 s(y,x,\ldots,x)=\mu(x,y),\qquad
 s(x,\ldots,x,\underset{j}{y},x,\ldots,x)=x\quad(2\leq j\leq\ell).
 \end{gathered}
\end{equation}
These are fixed compositions of $e$. A pair $(a,b)$ is a \emph{minority pair} if $\mu(a,b)=b$. Let $\Sig_\mu(R)$ consist of the forks of $R$ whose last two entries form a minority pair.

\begin{defi}\label{def:edgeframe}
For an $e$-closed relation $R\subseteq D^r$ of positive arity, an \emph{edge representation} is a set $F\subseteq R$ such that
\[
 \Sig_\mu(F)=\Sig_\mu(R),\qquad
 \pr_J(F)=\pr_J(R)\quad\text{for every }J\subseteq[r]\text{ with }|J|<\ell.
\]
We call the representation compact when its number of tuples is at most
\begin{equation}\label{eq:edge-bound}
 B_{\ell,d}(r)=2rd^2+\sum_{t=0}^{\min(\ell-1,r)}\binom rt d^t.
\end{equation}
Empty and nullary relations again have explicit representations.
\end{defi}

Our size convention retains witnesses for every projection of size less than $\ell$. The standard convention may retain just those of size $\min(\ell-1,r)$, since they also cover the smaller projections. Either convention gives $O(r^{\ell-1})$ tuples for fixed $D,e$, and a representation in our convention can be pruned to the standard convention in polynomial time.

We use two results from the few-subpowers algorithm in the following form.

\begin{thmC}[{\cite[Theorem~3.10 and Section~4]{IMMVW2010}}]\label{thm:classicaledge}
An edge representation $F$ of an $e$-closed relation $R$ satisfies $\clos{F}{e}=R$. Given a compact representation of $R\subseteq D^n$, distinct selected coordinates $i_1,\ldots,i_t$, and an explicitly listed $e$-closed relation $S\subseteq D^t$, the algorithm \textnormal{Next} computes a compact representation of
\[
 \{u\in R:(u_{i_1},\ldots,u_{i_t})\in S\}
\]
in polynomial time for fixed $D,e$.
\end{thmC}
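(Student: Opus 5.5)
This statement is cited from \cite{IMMVW2010} rather than proved here, so the plan is to reconstruct its two parts from the derived operations in~\eqref{eq:derived}. For generation, I would imitate Lemma~\ref{lem:maltsevgen}, with $p_e$ and $s$ replacing $p$. Fix a target $u\in R$. By induction on $i$, I would build a tuple $t\in\clos{F}{e}$ with $t_{<i}=u_{<i}$, and in fact with $\pr_J(t)=\pr_J(u)$ for more coordinates than just a prefix. The initial tuple comes from the projection condition, which supplies $t\in F$ agreeing with $u$ on any chosen set of fewer than $\ell$ coordinates.

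At step $i$, suppose $t_{<i}=u_{<i}$ and put $a=t_i$, $b=u_i$. First I move along a minority pair. The pair $(a,\mu(a,b))$ is a minority pair because $\mu(a,\mu(a,b))=\mu(a,b)$. I would check that it is a fork of $R$. This check is the delicate point: it requires a tuple of $R$ agreeing with $u_{<i}$ and having value $\mu(a,b)$ at $i$. I would obtain it by applying $s$ to $u$ together with $\ell-1$ tuples $w^2,\ldots,w^\ell\in R$, each agreeing with $u$ off a single coordinate and taking value $a$ there. Given a witness pair $v,w\in F$ for this fork, the tuple $p_e(t,v,w)$ agrees with $t$ before $i$ by $p_e(x,y,y)=x$. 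At $i$ it takes the value $p_e(a,a,\mu(a,b))=\mu(a,\mu(a,b))=\mu(a,b)$.

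Next I correct from $\mu(a,b)$ to $b$. Using the identities $s(x,\dots,y_j,\dots,x)=x$ and $s(y,x,\dots,x)=\mu(x,y)$, I would combine the current tuple with tuples from $\clos{F}{e}$ that agree with $u$ on all but at most $\ell-1$ selected positions. These auxiliary tuples exist by induction over the size of the set on which agreement is required; this is the standard "Claim" argument of \cite[Theorem~3.10]{IMMVW2010}. I expect this step to be the main obstacle. The induction must be set up on pairs (prefix length, bounded extra coordinate set) so that the projection condition $|J|<\ell$ exactly feeds the $\ell-1$ free arguments of $s$.

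For \textnormal{Next}, I would follow Section~4 of \cite{IMMVW2010}. For each minority fork $(i,a,b)$ and each small projection of the restricted relation, I would search for witnesses inside bounded-arity subpowers of the form $\pr_K\clos{F}{e}$ with $|K|\le t+\ell$. These subpowers are computed by closing projections of $F$ inside $D^{|K|}$, as in Lemma~\ref{lem:smallproj}, while keeping full witnesses. The required fork pairs are then produced by a procedure analogous to Lemma~\ref{lem:equality}: fix a found tuple's prefix and apply $p_e$. Polynomiality for fixed $D,e$ follows because every explicit closure lives in a power of bounded dimension and the output has at most $B_{\ell,d}(n)$ tuples.
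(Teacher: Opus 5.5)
Your reconstruction has a genuine gap at the step you flag as the main obstacle, and the mechanism you propose for it cannot work. The paper itself gives no proof of this statement; it imports it from \cite{IMMVW2010}. After reaching $\mu(a,b)$ at position $i$, you try to reach $b$ using only $p_e$, $s$ and the identities \eqref{eq:derived}. No argument of that kind can prove generation, as the following example shows.

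Take $D=\{0,1\}$ and $\ell=3$. Let $e(x_1,x_2,x_3,x_4)=\maj(x_2,x_3,x_4)$, except that $e(1,0,1,1)=0$; this $e$ satisfies \eqref{eq:edge}. Define
\[
\mu(x,y)=e(x,y,x,x)=x\wedge y,\qquad p_e(x,y,z)=e(y,z,x,x),\qquad s(x_1,x_2,x_3)=e(x_2,x_1,x_2,x_3).
\]
These satisfy every identity in \eqref{eq:derived}. Both $p_e$ and $s$ return $1$ only when their first argument is $1$.

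The minority pairs are the $(a,b)$ with $b\le a$. One checks that $F=\{110,101,011,000,001\}$ is an edge representation of $R=\{0,1\}^3$. Every tuple obtained from $F$ by iterating $p_e$ and $s$ lies coordinatewise below some member of $F$. Hence $111$ is never produced, whatever auxiliary tuples your induction builds in this way. Yet $111=e(001,101,011,110)$, where the three coordinates are instances of three different edge identities.

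So the correction step needs the edge term itself, not just $p_e$ and $s$. That is exactly the content you defer to the cited ``Claim'', so as written the generation half is not proved.

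There are also smaller points.
\begin{itemize}
\item The fork $(i,a,\mu(a,b))$ needs no care. The tuple $\mu(t,u)=s(u,t,\ldots,t)$ lies in $R$, agrees with $u$ before $i$ by idempotence, and has value $\mu(a,b)$ at $i$. By contrast, the tuples $w^j\in R$ you describe, equal to $u$ off one coordinate, need not exist.
\item For \textnormal{Next}, closing projections of $F$ in $D^{|K|}$ with $|K|$ up to $t+\ell-1$ is exponential in $t$ when $S$ has unbounded arity. So $S$ should be processed tuple by tuple through unary restrictions.
\item ``Fixing a found tuple's prefix'' presupposes an edge analogue of Lemma~\ref{lem:prefix}. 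That analogue applies $p_e$ to minority-fork witnesses, which agree on the already fixed earlier coordinates. Its small projections are computed by closure in bounded powers, which in turn relies on the generation half.
\end{itemize}
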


In our applications of \textnormal{Next}, $S$ is either the binary equality relation or a unary singleton. Its explicit size is therefore bounded by $d$. No large NFA-described relation is passed as the table argument of this algorithm.

\begin{lem}[Edge compilation]\label{lem:edgecompiler}
Given $M$ and $r$, if $R(M,r)$ is preserved by $e$, a compact edge representation of it can be computed in polynomial time.
\end{lem}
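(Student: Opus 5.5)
The plan is to assemble the edge representation directly from the two witness-extraction lemmas of Section~\ref{sec:regular}, without invoking $e$ during compilation. The resulting set is then checked against Definition~\ref{def:edgeframe}. Preservation by $e$ is used only through Theorem~\ref{thm:classicaledge}: once the two defining conditions hold, $F$ generates $R(M,r)$, so it is an edge representation in the required sense.

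\emph{Step 1 (forks).} Apply Lemma~\ref{lem:forkcompiler} to obtain $F_1\subseteq R(M,r)$ with $\Sig(F_1)=\Sig(R(M,r))$ and $|F_1|\le 2rd^2$. The set $\Sig_\mu$ is obtained from $\Sig$ by keeping the forks $(i,a,b)$ with $\mu(a,b)=b$. This filter depends only on the fork itself, so $\Sig_\mu(F_1)=\Sig_\mu(R(M,r))$. If a set is wanted of size closer to the bound, one may also retain only the witness pairs of minority forks, which still satisfies the equality. Since $D$ and $e$ are fixed, $\mu$ is a fixed table and this filtering is trivial.

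\emph{Step 2 (small projections).} For every $J\subseteq[r]$ with $|J|<\ell$ and every $a\in D^J$, apply Lemma~\ref{lem:pincompiler} to decide whether $a\in\pr_J(R(M,r))$. When it is, retain one accepted witness word. Let $F_2$ be the set of retained words. Then $\pr_J(F_2)=\pr_J(R(M,r))$ for all such $J$, and $|F_2|\le\sum_{t=0}^{\min(\ell-1,r)}\binom rt d^t$. The number of tests is polynomial because $\ell$ and $d$ are fixed, and each test is a layered reachability computation.

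\emph{Step 3 (combine).} Put $F=F_1\cup F_2\subseteq R(M,r)$. Adding tuples of $R$ cannot destroy either condition: enlarging $F$ only enlarges its signature and projections, and these are bounded above by those of $R$ because $F\subseteq R$. Hence both equalities of Definition~\ref{def:edgeframe} hold. The size bound $|F|\le B_{\ell,d}(r)$ follows by adding the two estimates. The degenerate cases are handled directly. If $R(M,r)=\varnothing$, which is detected by the $J=\varnothing$ test, output the empty representation. If $r=0$, output the appropriate nullary flag.

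There is no serious obstacle here. The only point needing care is that the compact-representation invariant of \cite{IMMVW2010} really consists of exactly these two pieces of data, namely minority forks and projections of size below $\ell$. This is precisely how Definition~\ref{def:edgeframe} is phrased, and Theorem~\ref{thm:classicaledge} is quoted for that definition, so exactness of the compilation is inherited. The membership $F\subseteq R(M,r)$, guaranteed by both lemmas, is what rules out spurious forks or projections.
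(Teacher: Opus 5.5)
Your proposal is correct and follows the paper's proof: you extract fork witnesses via Lemma~\ref{lem:forkcompiler} and keep the minority forks, you take projection witnesses for all $|J|<\ell$ via Lemma~\ref{lem:pincompiler}, and you form their union inside $R(M,r)$, bounded by~\eqref{eq:edge-bound}, with Theorem~\ref{thm:classicaledge} giving generation. Your observations that keeping all fork witnesses still fits within $B_{\ell,d}(r)$, and that taking the union cannot break either condition because $F\subseteq R(M,r)$, spell out steps the paper leaves implicit.
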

\begin{proof}
Apply Lemma~\ref{lem:forkcompiler} and retain witness pairs for the forks whose letter pairs satisfy $\mu(a,b)=b$. Apply Lemma~\ref{lem:pincompiler} for every $J\subseteq[r]$ with $|J|<\ell$, retaining one accepted word for every realized assignment on $J$. The union lies in $R(M,r)$ and has all the minority forks and all the required projections. Its size is bounded by~\eqref{eq:edge-bound}. By Theorem~\ref{thm:classicaledge}, it generates the entire slice. The number of reachability tests is polynomial because $D$ and $\ell$ are fixed. Nullary slices are decided by acceptance of the empty word.
\end{proof}

For use in the main algorithm, we spell out how products and projections preserve the stronger edge representation invariant.

\begin{lem}[Edge products and projections]\label{lem:edgecalculus}
Products, coordinate identifications, and arbitrary ordered projections of $e$-closed relations can be computed from their compact edge representations in polynomial time.
\end{lem}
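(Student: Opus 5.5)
The plan mirrors the \Mal\ calculus of Lemma~\ref{lem:maltsevcalculus}. The only new ingredient is \textnormal{Next} from Theorem~\ref{thm:classicaledge}, which takes over the roles of Lemmas~\ref{lem:prefix} and~\ref{lem:equality}.

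\emph{Products.} Let $F,G$ be compact edge representations of nonempty $R\subseteq D^n$ and $S\subseteq D^m$. Rather than build an anchored set as in~\eqref{eq:anchored}, which would not cover mixed small projections, I form the product by starting from a representation of $R\times D^m$ and intersecting. A compact representation of $D^m$ is written down explicitly. Its tuples are a base tuple, the tuples changing one coordinate, and all assignments on sets of fewer than $\ell$ coordinates extended by the base tuple. For $R\times D^m$ I take the tuples of $F$ extended by every small-projection witness of $D^m$, together with a fixed $f\in F$ extended likewise. This is easy to check: a minority fork in the second block is witnessed with the first block frozen at $f$. A small projection split as $J_1\cup J_2$ is realized by taking the witness for $J_1$ from $F$ and extending it by the explicit assignment on $J_2$.

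\emph{Imposing $S$.} The cleaner route is sequential: represent $D^{n+m}$ explicitly and impose membership in $S$. Since \textnormal{Next} only accepts explicitly listed relations of bounded arity, I would instead realize the product through equalities, in the same way the main algorithm will treat constraints. For each minority fork of $S$ and each small projection of $S$, I take the corresponding witness $s\in S$ together with witnesses from $F$ and $D^m$. If this cross-union turns out to be delicate, I fall back on explicitly verifying the two defining conditions of Definition~\ref{def:edgeframe}. Minority forks in block two hold with block one fixed at an anchor. Forks in block one are inherited from $F$ with block two at an anchor. Each projection $J=J_1\cup J_2$ with $|J|<\ell$ is realized by the concatenation of a $J_1$-witness of $F$ and a $J_2$-witness of $G$, and there are polynomially many such pairs. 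The resulting size then exceeds $B_{\ell,d}$ only by a polynomial factor, and pruning to one witness per fork and per projection restores compactness.

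\emph{Identifications and projections.} A coordinate identification $t_\alpha=t_\beta$ is exactly one call to \textnormal{Next} with $S$ the binary equality relation, which is $e$-closed and explicitly of size $d$. A prefix projection preserves both invariants. Minority forks lift as in Lemma~\ref{lem:maltsevcalculus}, since the minority condition concerns only the letters. Small projections onto a subset of the prefix coincide. For a general ordered list $(i_1,\ldots,i_k)$, I adjoin $k$ free leading coordinates using the product step with the explicit representation of $D^k$. I then impose $z_j=x_{i_j}$ by \textnormal{Next} and project to the prefix, which handles repetitions and permutations.

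The main obstacle is the product step: verifying the small-projection condition for sets straddling both blocks, and keeping the retained set compact. This is handled by the pairing of projection witnesses and the pruning described above. Everything else reduces to Theorem~\ref{thm:classicaledge}.
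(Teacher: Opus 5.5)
Your fallback argument for products is correct, and it is essentially the paper's proof. The paper takes the full product $F\times G$, which has polynomial size. Minority forks in either block are witnessed with the other block frozen at a member of the other representation. Every straddling small projection is realized because $\pr_J(R\times S)=\pr_{J_1}(R)\times\pr_{J_2}(S)$, with both factors already visible in $F$ and $G$. The paper then prunes. You are right that the anchored set~\eqref{eq:anchored} would miss straddling projections, and this is exactly why the paper uses all of $F\times G$. Your treatment of identifications (\textnormal{Next} with $\Eq_D$), prefix projections, and ordered projections (adjoin a compact representation of $D^k$, identify, take a prefix) also matches the paper.

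You should, however, delete the route you present first rather than keep the direct verification as a fallback. Starting from $R\times D^m$, you have no means of imposing membership in $S$. \textnormal{Next} requires $S$ to be explicitly listed, and $S$ is given only by $G$. Your alternative of realizing $S$ ``through equalities, as the main algorithm treats constraints'' already uses the product operation you are trying to construct, so it is circular. Intersecting with $D^n\times S$ has the same problem.

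A small correction: if you select one anchored witness pair per minority fork and one concatenated witness per realized projected tuple on fewer than $\ell$ coordinates, the retained set already has at most $B_{\ell,d}(n+m)$ tuples, so nothing exceeds the bound.
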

\begin{proof}
Consider nonempty relations $R\subseteq D^n$ and $S\subseteq D^m$ with representations $F,G$. Form the full product $F\times G$ of the two compact sets. Its size is polynomial. Minority forks of $R\times S$ in the first coordinate block are supplied by witness pairs from $F$ with any member of $G$ fixed. Those in the second block are supplied by witness pairs from $G$ with any member of $F$ fixed. Thus $F\times G$ has the required minority signature.

For its projection part, fix a set $J$ of fewer than $\ell$ coordinates of the product. Split it into $J_R$ and $J_S$ in the two factors. The desired projection is
\[
 \pr_J(R\times S)=\pr_{J_R}(R)\times\pr_{J_S}(S).
\]
Both factor projections are already visible in $F$ and $G$, so every pair of realized projected tuples has a witness in $F\times G$. Hence this product also has all the required small projections. Prune it by choosing a witness pair for each minority fork and one witness for each realized small projection. The result is a compact representation of $R\times S$.

Coordinate identification is an application of \textnormal{Next} with $S=\Eq_D=\{(a,a):a\in D\}$. For prefix projection, all required small projections are inherited. A minority fork in the projected relation lifts to a fork in the original relation at the same position and with the same values, so it too has projected witnesses. Therefore a prefix projection of an edge representation remains an edge representation.

For an arbitrary ordered projection, adjoin fresh coordinates before the original ones, identify them with the selected coordinates, and take a prefix projection. A compact representation of $D^k$ is obtained from a base tuple by allowing changes on fewer than $\ell$ coordinates. This set supplies every small projection and every minority fork. All construction sizes are polynomial. Empty and nullary relations are handled separately.
\end{proof}

\begin{thm}[Automatic constraints with an edge polymorphism]\label{thm:edge}
Fix a finite domain $D$ and an $\ell$-edge operation $e$ on $D$. Automatic constraint satisfaction is solvable in polynomial time under the promise that $e$ preserves each used length slice. A compact edge representation of the complete solution relation can be computed in polynomial time. For any explicit ordered boundary list of length $k\geq1$, a representation of its solution relation with at most $B_{\ell,d}(k)$ tuples can also be computed in polynomial time. Repeated variables are allowed in both constraint scopes and boundary lists.
\end{thm}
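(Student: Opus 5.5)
The proof mirrors Theorem~\ref{thm:maltsev}, with frames replaced by compact edge representations and the \Mal\ calculus replaced by Theorem~\ref{thm:classicaledge} and Lemma~\ref{lem:edgecalculus}.

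First, each constraint slice $R(M_j,r_j)$ is compiled by Lemma~\ref{lem:edgecompiler} into a compact edge representation. This step uses the promise that $e$ preserves the slice. The degenerate cases are handled by explicit flags: the nullary case is decided by acceptance of the empty word, and an empty slice makes the instance unsatisfiable. The initial representation of $D^n$, with $n=|V|$, is the base tuple together with all changes on fewer than $\ell$ coordinates. Its size is at most $\sum_{t<\ell}\binom nt(d-1)^t$, which fits within $B_{\ell,d}(n)$.

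Next, constraints are processed one at a time, keeping a compact representation of the solution relation of the constraints already handled. This relation is $e$-closed, since it is pp-definable from $e$-closed relations. For a constraint with relation $S\subseteq D^r$ and scope $(v_{i_1},\ldots,v_{i_r})$, Lemma~\ref{lem:edgecalculus} gives a compact representation of the product, in coordinate order $(x_1,\ldots,x_n,y_1,\ldots,y_r)$. Then each identification $x_{i_j}=y_j$ is imposed by one call to \textnormal{Next} with $S=\Eq_D$, a table of size $d$. Finally we take the prefix projection to the first $n$ coordinates. Repeated variables simply identify several $y$-coordinates with the same $x$-coordinate.

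Each intermediate relation has arity at most $n+\max_j r_j\le N$. The total number of \textnormal{Next} calls is $\sum_j r_j$. After each call we prune back to size $B_{\ell,d}$. So every step is polynomial for fixed $D,e$, with the exponent depending on $\ell$. The final representation is empty exactly when the instance is unsatisfiable, and every retained tuple is a solution. This settles both the decision problem and the claim about the full solution relation.

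For a boundary list $\mathbf b$ of length $k\ge1$, apply the ordered-projection part of Lemma~\ref{lem:edgecalculus}. That is, adjoin $k$ fresh coordinates in front, identify each with its selected coordinate by \textnormal{Next}, and project to the prefix of length $k$. Repetitions in $\mathbf b$ are harmless, because each fresh coordinate gets its own identification. The result is an edge representation of $\Sol_{\mathbf b}(\mathcal I)$ as a relation of arity $k$. Pruning it to one witness pair per minority fork plus one witness per realized projection of size less than $\ell$ gives at most $B_{\ell,d}(k)$ tuples.

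The step I expect to need the most care is maintaining the invariant through these compositions. Keeping representations compact after each \textnormal{Next} call and each projection is what stops sizes from compounding over the $\sum_j r_j$ updates. A second point to verify is that no call ever receives a large NFA relation as a table argument: \textnormal{Next} is only given $\Eq_D$ or singletons, and all NFA information enters through the product step.
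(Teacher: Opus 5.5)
Your proposal is correct and follows essentially the same route as the paper. You compile each slice by Lemma~\ref{lem:edgecompiler}, then run the constraint-insertion procedure of Theorem~\ref{thm:maltsev}: a product, one \textnormal{Next} call with $\Eq_D$ per scope position, and a prefix projection, all via Lemma~\ref{lem:edgecalculus}. You finish with an ordered projection and pruning to $B_{\ell,d}(k)$ tuples. The paper states this only briefly, and your additional details (the explicit compact representation of $D^n$, the arity bound $n+\max_j r_j\le N$, and the fact that \textnormal{Next} receives only constant-size tables) are consistent with it.
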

\begin{proof}
Compile all slices by Lemma~\ref{lem:edgecompiler}. Use the constraint-insertion procedure from the proof of Theorem~\ref{thm:maltsev}, replacing its frame operations by Lemma~\ref{lem:edgecalculus}. All equality relations used by \textnormal{Next} have constant size. There are only $\sum_j r_j$ such updates, and intermediate arities remain at most $|V|+\max_j r_j$. A final ordered projection computes the requested boundary relation. Pruning gives the bound~\eqref{eq:edge-bound}.
\end{proof}

The theorem also applies when all slices are invariant under the basic operations of a fixed finite algebra with few subpowers. Every term operation of that algebra preserves the slices, and an edge term exists by the characterization quoted above. This is a fixed-algebra statement. It does not assert a uniform polynomial-time procedure for discovering edge terms from arbitrary input algebras.

\begin{cor}[Comparison and enumeration]\label{cor:comparison}
For two automatic instances over the same fixed $D,e$ and ordered boundary lists of the same length, inclusion and equality of their boundary solution relations are decidable in polynomial time. A failed inclusion has a computable witnessing boundary tuple. Solutions of an automatic instance can be enumerated with polynomial delay and polynomial space. The same conclusions hold semiuniformly in the \Mal\ setting of Theorem~\ref{thm:maltsev}.
\end{cor}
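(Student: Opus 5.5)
Both parts use the same device: impose constant constraints on a compact representation and test nonemptiness, which the calculus already supports.

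\emph{Comparison.} Let $\mathcal I,\mathcal J$ have boundary lists of length $k$. By Theorem~\ref{thm:edge} (or Theorem~\ref{thm:maltsev}), compute compact representations $F$ of $R=\Sol_{\mathbf b}(\mathcal I)$ and $G$ of $S=\Sol_{\mathbf b'}(\mathcal J)$. By Theorem~\ref{thm:classicaledge} (Lemma~\ref{lem:maltsevgen} in the \Mal\ case), $\clos{F}{e}=R$. Since $S$ is $e$-closed, $R\subseteq S$ holds if and only if $F\subseteq S$. There are polynomially many tuples in $F$, so it remains to test membership of a single tuple $u\in D^k$ in $S$. For this we apply \textnormal{Next} with unary singleton relations $\{u_j\}$ at each coordinate $j$ to $G$; the Mal'tsev analogue is Lemma~\ref{lem:prefix} after reordering. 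The tuple $u$ lies in $S$ exactly when the result is nonempty. Singletons are $e$-closed because $e$ is idempotent. If some $u\in F$ fails the test, that $u$ is the witnessing boundary tuple. Equality is inclusion in both directions. The main point to verify is that the reduction from $R\subseteq S$ to $F\subseteq S$ is valid; this is exactly where closure of $S$ under $e$ and the generation theorem are used. The remaining work is bookkeeping on the empty and nullary flags.

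\emph{Enumeration.} We use a standard flashlight search over the variables $v_1,\ldots,v_n$ in a fixed order. Keep a stack of compact representations $F_0,F_1,\ldots$, where $F_i$ represents the solutions extending a chosen prefix $(a_1,\ldots,a_i)$. Obtain $F_{i+1}$ from $F_i$ by \textnormal{Next} with the singleton $\{a_{i+1}\}$ at coordinate $i+1$, or by Lemma~\ref{lem:prefix} in the \Mal\ case. Because nonemptiness of every node is checked before descending, each branch reaches a solution. Successive outputs are therefore separated by at most $O(nd)$ restriction calls, giving polynomial delay. To get polynomial space, store only the current depth-$n$ chain of representations together with an iteration index at each level. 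The same argument applies in the semiuniform \Mal\ setting, since every lemma invoked there is polynomial in $d$ and in the size of the table of $p$.
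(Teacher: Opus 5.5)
Your proposal is correct and follows the paper's proof essentially step for step: inclusion reduces to $F\subseteq S$ via the generation theorem and $e$-closure of $S$, membership is tested by successive singleton restrictions (\textnormal{Next}, or Lemma~\ref{lem:prefix} in the \Mal\ case), and enumeration is a depth-first traversal pruned by the same nonemptiness tests. One cosmetic point is that no reordering is needed for the \Mal\ membership test, since fixing all $k$ coordinates is already a prefix restriction with $j=k$.
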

\begin{proof}
Let $F$ represent $R$, and let $S$ be $e$-closed. Since $\clos{F}{e}=R$,
\[
 R\subseteq S\quad\Longleftrightarrow\quad F\subseteq S.
\]
Membership of a tuple in $S$ is decided by successive unary restrictions of its representation using \textnormal{Next}. Thus a tuple of $F$ outside $S$ witnesses a failed inclusion. Equality is mutual inclusion. Nullary cases are tested from their flags.

For enumeration, fix a variable order and traverse the assignment tree depth first. Test a prefix by successive singleton restrictions, and discard it precisely when the resulting relation is empty. Every surviving prefix extends to a solution. Between consecutive solutions, the traversal performs at most polynomially many prefix tests, including at most $d$ trials at each of $|V|$ levels. Storing a polynomial-size representation at each level uses polynomial space. In the \Mal\ setting, Lemma~\ref{lem:prefix} implements the restrictions with polynomial dependence on $d$.
\end{proof}

\begin{prop}[Checking preservation for DFAs]\label{prop:dfa}
Fix the arity $t$ of an operation $f:D^t\to D$. Given its table and a complete DFA with $s$ states, preservation of every length slice by $f$ can be decided in polynomial time. Preservation at a single explicitly specified length is also decidable in polynomial time.
\end{prop}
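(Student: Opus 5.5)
The natural approach is a product construction. For a complete DFA $A=(Q,q_0,\delta,Q_f)$ with $s$ states, the slice $L(A)\cap D^r$ fails to be $f$-closed exactly when there are words $u^1,\ldots,u^t\in L(A)\cap D^r$ with $f(u^1,\ldots,u^t)\notin L(A)$. This is a reachability question in the synchronous product of $t+1$ copies of $A$. Let the product automaton have state set $Q^{t+1}$, initial state $(q_0,\ldots,q_0)$, and, for every letter tuple $(a_1,\ldots,a_t)\in D^t$, a transition from $(q_1,\ldots,q_t,q)$ to $(\delta(q_1,a_1),\ldots,\delta(q_t,a_t),\delta(q,f(a_1,\ldots,a_t)))$. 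The last component then tracks the run of $A$ on the image word. Completeness and determinism of $A$ make this run unique, so rejection of the image is equivalent to its last component lying outside $Q_f$. Call a product state \emph{bad} if its first $t$ components lie in $Q_f$ and its last does not. The key claim is that $f$ preserves every slice if and only if no bad state is reachable from the initial state. A reachable bad state at depth $r$ yields $t$ accepted words of length $r$ whose image is rejected. Conversely, any counterexample at length $r$ drives the product to a bad state in $r$ steps.

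Counting the cost, the product has $s^{t+1}$ states and $d^t$ letters per state. Since $t$ is fixed, a single breadth-first search takes time $O(s^{t+1}d^t)$ after reading the table of $f$, which is polynomial. This handles all lengths at once, because unbounded lengths only matter through reachability, not through any explicit bound on $r$.

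For a single specified length $r$, I would use the layered version of the same graph. Compute the set $P_k$ of product states reachable by words of length exactly $k$, for $k=0,\ldots,r$, and test whether $P_r$ contains a bad state. Each layer has at most $s^{t+1}$ states, so the cost is $O(r\,s^{t+1}d^t)$, polynomial in $r$. Here $r$ is given explicitly in unary, matching the arity convention of Section~\ref{sec:regular}. If $r$ were binary, one could instead iterate Boolean matrix powers of the one-step relation on $Q^{t+1}$ by repeated squaring, which is still polynomial.

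The main point to handle carefully is the role of determinism and completeness. For an NFA, rejection of the image word would require all runs on it to fail, which is a universality-type condition and would break the simple reachability test. With a complete DFA, membership of the image is read off from a single tracked state, so the equivalence goes through. One should also note that $0$-ary slices and the empty word are covered by the initial state: if $q_0\in Q_f$, then the product's initial state is not bad. Fixing $t$ is essential, since the product dimension is exponential in $t$.
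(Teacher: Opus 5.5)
Your proposal is correct and is essentially the paper's proof. Both use the synchronous product of $t+1$ copies of the DFA in which the last copy reads $f(a_1,\ldots,a_t)$, test reachability of a state whose first $t$ components accept and whose last rejects, and use layered reachability for a single length. Your additional remarks on the role of determinism and completeness, and on repeated squaring of the one-step Boolean matrix if the length were given in binary, are accurate refinements that the paper does not spell out.
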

\begin{proof}
Use $t+1$ synchronous copies of the DFA. At each step choose letters $a_1,\ldots,a_t\in D$. The first $t$ copies read these letters, and the last copy reads $f(a_1,\ldots,a_t)$. A reachable product state with the first $t$ coordinates accepting and the last rejecting witnesses a failure of preservation. Conversely, any failure supplies such a path. The product has $s^{t+1}$ states and at most $d^t$ outgoing choices per state. Restricting reachability to the specified number of layers tests a single length.
\end{proof}

For general NFAs our algorithms use the preservation promise. They construct neither a complement automaton nor a determinization. This is why Proposition~\ref{prop:dfa} is stated separately rather than included in the input processing of Theorems~\ref{thm:maltsev} and~\ref{thm:edge}.

\section{Canonical quadratic representations for a 3-edge family}\label{sec:normalforms}

The general bound~\eqref{eq:edge-bound} gives $O(k^3)$ bits for a 3-edge operation. We now obtain a canonical quadratic representation for an explicit family by characterizing its invariant relations. Fix a prime $p$ and let $D_p=\{\bot\}\mathbin{\dot\cup}\Fp$, with an inactive state $\bot$ distinct from the active field element $0$. Define activity and value maps by
\[
 \chi(\bot)=0,\quad \chi(a)=1\ (a\in\Fp),\qquad
 \eta(\bot)=0,\quad \eta(a)=a\ (a\in\Fp).
\]
Both maps act coordinatewise. Activity bits $q\in\{0,1\}^k$ and values $v\in\Fp^k$ describe a tuple over $D_p$ exactly when $v_i=0$ at every position with $q_i=0$. Let $D_q$ be the diagonal linear map $v\mapsto(q_iv_i)_{i=1}^k$, and define
\begin{equation}\label{eq:activeedge}
 e_p(a,b,c,d)=
 \begin{cases}
  \bot,&\maj(\chi(b),\chi(c),\chi(d))=0,\\
  -\eta(a)+\eta(b)+\eta(c),&\maj(\chi(b),\chi(c),\chi(d))=1.
 \end{cases}
\end{equation}
The second case is interpreted in the active copy of $\Fp$. The prime and this operation remain fixed throughout the section.

\begin{lem}\label{lem:activemask}
The operation $e_p$ is a 3-edge operation, and the term $\mu_p(x,y)=e_p(x,x,y,x)$ satisfies
\begin{equation}\label{eq:activemask}
 \chi(\mu_p(x,y))=\chi(x),\qquad
 \eta(\mu_p(x,y))=\chi(x)\eta(y).
\end{equation}
The algebra $(D_p,e_p)$ has neither a \Mal\ term nor a near-unanimity term.
\end{lem}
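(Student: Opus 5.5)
We verify the three parts separately. Each is a direct computation from~\eqref{eq:activeedge}, together with one invariant relation that blocks each kind of term.

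\emph{Edge identities and~\eqref{eq:activemask}.} For $\ell=3$ the identities~\eqref{eq:edge} are
\[
 e_p(x,x,y,y)=y,\qquad e_p(x,y,x,y)=y,\qquad e_p(y,y,y,x)=y.
\]
In each of them, the activity argument $\maj(\chi(b),\chi(c),\chi(d))$ has two entries equal to $\chi(y)$, so it equals $\chi(y)$.

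If $y=\bot$, the output is $\bot=y$. If $y$ is active, the value is respectively
\[
 -\eta(x)+\eta(x)+\eta(y),\qquad -\eta(x)+\eta(y)+\eta(x),\qquad -\eta(y)+\eta(y)+\eta(y),
\]
and each of these equals $y$.

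For $\mu_p(x,y)=e_p(x,x,y,x)$, the majority is $\maj(\chi(x),\chi(y),\chi(x))=\chi(x)$, which gives the first identity in~\eqref{eq:activemask}. When $x$ is active, the value is $-\eta(x)+\eta(x)+\eta(y)=\eta(y)$. When $x=\bot$, the output is $\bot$, whose value is $0$. In both cases $\eta(\mu_p(x,y))=\chi(x)\eta(y)$.

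\emph{No \Mal\ term.} Consider the binary relation
\[
 R_\le=\{(a,b)\in D_p^2:\chi(a)\le\chi(b)\}.
\]
The activity of $e_p(a,b,c,d)$ is a monotone function of the activities of its arguments, namely the majority of the last three. Hence $e_p$ preserves $R_\le$, and so does every term operation.

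Suppose $p$ were a \Mal\ term. Apply it coordinatewise to the three pairs $(\bot,\bot),(\bot,0),(0,0)\in R_\le$. The first coordinates are $\bot,\bot,0$, so they give $p(\bot,\bot,0)=0$. The second coordinates are $\bot,0,0$, so they give $p(\bot,0,0)=\bot$. The result is the pair $(0,\bot)$, which does not lie in $R_\le$. This is a contradiction.

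\emph{No near-unanimity term.} Consider the ternary relation
\[
 S=\{(x,y,z)\in\Fp^3:x+y+z=0\},
\]
viewed as a set of all-active tuples over $D_p$. On all-active inputs, $e_p$ is active and computes the affine combination $-a+b+c$, whose coefficients sum to $1$. Such combinations preserve the linear subspace $S$, so every term operation preserves $S$.

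By the Baker--Pixley argument, a relation preserved by a near-unanimity operation is determined by its binary projections. Every binary projection of $S$ equals $\Fp^2$. The relation would therefore be all of $\Fp^3$, which contradicts $S\ne\Fp^3$. One can also argue directly: apply an $n$-ary near-unanimity term to $n$ tuples of $S$ chosen so that each coordinate column is near-unanimous, yet the resulting tuple lies outside $S$.

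\emph{Main obstacle.} The only step that needs any care is the choice of these two obstruction relations. Once $R_\le$ and $S$ are chosen, the remaining verifications are routine.
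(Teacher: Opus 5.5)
Your verification of the edge identities and of \eqref{eq:activemask} is correct and matches the paper's computation. Your $R_\le$ argument for the \Mal\ case is the paper's monotone activity-quotient argument, rephrased as an invariant relation.

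The gap is in the near-unanimity part. Baker--Pixley says that a relation preserved by an $n$-ary near-unanimity operation is determined by its projections onto $(n-1)$-element coordinate sets. ``Binary projections'' is only the majority case $n=3$. For your ternary $S$ and $n\ge4$, the theorem is vacuous, since $S$ is one of its own $(n-1)$-ary projections. Your direct argument fails for the same reason. If $n\ge4$ tuples of $S$ have near-unanimous columns, each of the three columns has at most one deviating row. So some row deviates nowhere, it equals the output, and the output therefore lies in $S$. An $n$-ary near-unanimity term yields near-unanimity terms of all larger arities, but not conversely. As written, you have only excluded a majority term.

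There are two ways to repair this.
\begin{itemize}
\item For an $n$-ary near-unanimity term $t$, use the all-active relation $S_n=\{x\in\Fp^n:x_1+\cdots+x_n=0\}$. It is preserved by $e_p$ for the same reason as $S$. Take the zero tuple together with the tuples having $1$ in coordinate $1$, $-1$ in coordinate $i$, and $0$ elsewhere ($2\le i\le n$). These lie in $S_n$ and have near-unanimous columns, and $t$ sends them to $(1,0,\ldots,0)\notin S_n$.
\item Alternatively, argue as the paper does. The active elements form a subalgebra on which every term operation is $\sum_i\lambda_ix_i$ with $\sum_i\lambda_i=1$. The near-unanimity identities at $x=0$, $y=1$ then force every $\lambda_i=0$.
\end{itemize}
Your ternary $S$ does in fact admit no near-unanimity polymorphism of any arity, but proving that needs this same computation. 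A polymorphism $f$ of $S$ satisfies $f(u)+f(v)+f(-u-v)=0$ on $\Fp^n$, so it is affine, and the coefficient argument applies.
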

\begin{proof}
In each of $e_p(x,x,y,y)$, $e_p(x,y,x,y)$, and $e_p(y,y,y,x)$, the output activity is $\chi(y)$ and the unmasked field value is $\eta(y)$. These are the identities~\eqref{eq:edge} for $\ell=3$. In $\mu_p(x,y)$ the activity majority is $\chi(x)$ and the unmasked value is $\eta(y)$, proving~\eqref{eq:activemask}.

The activity quotient has only monotone term operations, since its basic operation is the Boolean majority of the last three inputs. A Boolean \Mal\ operation would take values $1$ on $(0,0,1)$ and $0$ on $(0,1,1)$, contrary to monotonicity. Thus no \Mal\ term exists. On the active subalgebra $\Fp$, every term is a linear combination $\sum_i\lambda_i x_i$ with $\sum_i\lambda_i=1$. The near-unanimity identities, applied to one input $1$ and all other inputs $0$, would force every $\lambda_i=0$. Hence no near-unanimity term exists either.
\end{proof}

For a nonempty $e_p$-closed relation $R\subseteq D_p^k$, put
\begin{equation}\label{eq:activecanonical}
 Q_R=\chi(R),\qquad W_R=\aff_{\Fp}(\eta(R)).
\end{equation}
The use of a prime field ensures that additive subgroups of its vector spaces are linear subspaces.

\begin{lem}[Activity fibres]\label{lem:activefibres}
The relation $Q_R$ is majority-closed. For every $q\in Q_R$, its value fibre satisfies
\begin{equation}\label{eq:activefibre}
 \{\eta(t):t\in R,\ \chi(t)=q\}=D_qW_R\subseteq W_R.
\end{equation}
Consequently,
\begin{equation}\label{eq:activenormal}
 R=\{t\in D_p^k:\chi(t)\in Q_R,\ \eta(t)\in W_R\}.
\end{equation}
\end{lem}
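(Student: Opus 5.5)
The plan is to exploit the fact that $\chi$ turns $e_p$ into a Boolean majority operation on activities, and that on a fixed activity pattern $e_p$ acts as the affine operation $x-y+z$. The term $\mu_p$ from Lemma~\ref{lem:activemask} is then used to move value vectors into a prescribed activity pattern. Throughout I use that a tuple $t\in D_p^k$ is determined by the pair $(\chi(t),\eta(t))$, and that $\eta(t)=D_{\chi(t)}\eta(t)$, because inactive coordinates have value $0$.

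\emph{Majority closure.} Directly from~\eqref{eq:activeedge}, every $a,b,c,d$ satisfy $\chi(e_p(a,b,c,d))=\maj(\chi(b),\chi(c),\chi(d))$. Given $q^1,q^2,q^3\in Q_R$ with preimages $t^1,t^2,t^3\in R$, the tuple $e_p(t^1,t^1,t^2,t^3)$ lies in $R$. Its activity is $\maj(q^1,q^2,q^3)$, so $Q_R$ is majority-closed.

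\emph{Fibres.} Fix $q\in Q_R$ and write $F_q$ for the left side of~\eqref{eq:activefibre}.

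The inclusion $F_q\subseteq D_qW_R$ is immediate. If $\chi(t)=q$, then $\eta(t)=D_q\eta(t)$ and $\eta(t)\in\eta(R)\subseteq W_R$.

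For the reverse inclusion I would proceed in two steps.

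\begin{enumerate}
\item \emph{$F_q$ contains $D_q\eta(R)$.} Fix an anchor $t\in R$ with $\chi(t)=q$. For any $s\in R$, the tuple $\mu_p(t,s)$ lies in $R$. By~\eqref{eq:activemask}, its activity is $q$ and its value is $D_q\eta(s)$.
\item \emph{$F_q$ is closed under $x-y+z$.} Take $t^1,t^2,t^3\in R$ of activity $q$ and form $e_p(t^1,t^2,t^3,t^3)$. Its activity is $\maj(q,q,q)=q$. On active coordinates its value is $-\eta(t^1)+\eta(t^2)+\eta(t^3)$. On inactive coordinates it is $\bot$, where that same combination is also $0$.
\end{enumerate}

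Step~2 makes $F_q$ a nonempty subset of $\Fp^k$ closed under $x-y+z$. Hence $F_q-F_q$ is an additive subgroup. Since $p$ is prime, an additive subgroup is closed under $\Fp$-scalars, so it is a linear subspace and $F_q$ is an affine subspace. Combining with step~1, and using that the linear map $D_q$ commutes with affine hulls,
\[
 F_q\supseteq\aff_{\Fp}(D_q\eta(R))=D_q\aff_{\Fp}(\eta(R))=D_qW_R .
\]
This proves the equality in~\eqref{eq:activefibre}. The inclusion $D_qW_R\subseteq W_R$ then follows from $D_qW_R=F_q\subseteq\eta(R)\subseteq W_R$.

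\emph{Normal form.} In~\eqref{eq:activenormal}, the inclusion ``$\subseteq$'' holds by definition. For ``$\supseteq$'', let $\chi(t)=q\in Q_R$ and $\eta(t)\in W_R$. Then $\eta(t)=D_q\eta(t)\in D_qW_R=F_q$. So some $t'\in R$ has $\chi(t')=q$ and $\eta(t')=\eta(t)$, and hence $t=t'\in R$.

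The step I expect to need the most care is the affine closure in step~2. There one must check that the formula in the active case of~\eqref{eq:activeedge} agrees coordinatewise with the masked combination $D_q(-\eta(t^1)+\eta(t^2)+\eta(t^3))$. One must also invoke primality to pass from a subgroup to a subspace. Everything else is bookkeeping with the identities~\eqref{eq:activemask}.
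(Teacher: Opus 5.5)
Your proof is correct and follows essentially the same route as the paper. Both arguments get majority closure from the activity homomorphism. Both show each fibre is affine by applying $e_p$ inside the fibre, with primality turning the subgroup into a subspace. Both use the masking term $\mu_p$ to push all of $\eta(R)$ into the fibre of $q$ (the paper's $D_qT_s\subseteq T_q$ for all $s\in Q_R$), and then use that $D_q$ commutes with affine hulls. The only difference is that you write out the concrete substitutions $e_p(t^1,t^1,t^2,t^3)$ and $e_p(t^1,t^2,t^3,t^3)$, which the paper leaves implicit.
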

\begin{proof}
The activity map is a homomorphism, so $Q_R$ is majority-closed. Write $T_q$ for the fibre on the left of~\eqref{eq:activefibre}. Applying $e_p$ within one activity fibre shows that $T_q$ is closed under $(u,v,w)\mapsto-u+v+w$. Translating by any of its elements gives a set containing zero and closed under addition and additive inverses. Hence $T_q$ is a nonempty affine space over $\Fp$.

The masking term~\eqref{eq:activemask} gives $D_qT_s\subseteq T_q$ for all $q,s\in Q_R$. Since $D_q$ is linear and $T_q$ is affine,
\[
 D_qW_R=\aff\left(\bigcup_{s\in Q_R}D_qT_s\right)\subseteq T_q.
\]
Every $v\in T_q$ belongs to $W_R$ and has $D_qv=v$, giving equality. This also proves $D_qW_R\subseteq W_R$. If a tuple $t\in D_p^k$ satisfies the two conditions in~\eqref{eq:activenormal}, then its value vector is fixed by $D_{\chi(t)}$ and therefore belongs to the appropriate fibre. This proves the reverse inclusion in~\eqref{eq:activenormal}. The forward inclusion follows from the definitions.
\end{proof}

Conversely, suppose that $Q\subseteq\{0,1\}^k$ is nonempty and majority-closed, and $W\subseteq\Fp^k$ is nonempty and affine with $D_qW\subseteq W$ for every $q\in Q$. Then
\begin{equation}\label{eq:activepair}
 R(Q,W)=\{t\in D_p^k:\chi(t)\in Q,\ \eta(t)\in W\}
\end{equation}
is nonempty and $e_p$-closed, and its activity image is exactly $Q$. Indeed, $(q,D_qw)$ describes a tuple for every $q\in Q,w\in W$. For four such tuples, the output activity belongs to $Q$ and the unmasked output value $-v^1+v^2+v^3$ belongs to $W$. Masking keeps it in $W$. The condition $D_qW\subseteq W$ is thus the compatibility needed to combine activity and value constraints.

Partition the coordinates according to equality of their activity functions $q\mapsto q_i$ on $Q_R$. Let $P$ be the always-active coordinates and $Z$ the always-inactive coordinates. Denote the remaining classes of equal nonconstant activity functions by $C_1,\ldots,C_s$.

\begin{lem}[Independent value factors]\label{lem:activeblocks}
Up to coordinate order, the canonical affine space has the form
\begin{equation}\label{eq:activeblocks}
 W_R=A_P\times\{0\}^{Z}\times\prod_{j=1}^s L_{C_j},
\end{equation}
where $A_P\subseteq\Fp^P$ is a nonempty affine space and each $L_{C_j}\leq\Fp^{C_j}$ is linear. Conversely, every nonempty majority-closed $Q$ and spaces of this form give the canonical pair of an $e_p$-closed relation.
\end{lem}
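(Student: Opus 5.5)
The plan is to work entirely with the compatibility condition $D_qW_R\subseteq W_R$ for $q\in Q_R$ from Lemma~\ref{lem:activefibres}, together with closure of $Q_R$ under majority. First I collect the idempotent diagonal projections available. Since $W_R$ is affine, the set $\{x\in\Fp^k: D_xW_R\subseteq W_R\}$ is closed under several operations. If $D_qW\subseteq W$ and $D_{q'}W\subseteq W$, then $D_{q\wedge q'}=D_qD_{q'}$ also maps $W$ into $W$. The coordinatewise majority of bit vectors lies in $Q_R$ directly.

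Next I isolate a single activity class. For each class $C_j$ I want the projection $D_{\mathbf 1_{C_j\cup P}}$, possibly combined with a translation. The aim is to show that $W_R$ contains the vectors obtained by keeping the $C_j$ coordinates of $w$ and zeroing the other nonconstant classes. Since $Q_R$ is majority-closed, it is a median-closed Boolean relation, that is, a set of 2-SAT solutions. For two distinct nonconstant activity functions $i\in C_j$ and $i'\in C_{j'}$, there is some $q\in Q_R$ with $q_i\neq q_{i'}$. The key claim is that for a fixed class $C_j$ and each other nonconstant class $C$, we can find, in the semigroup generated by the masks $\{q:q\in Q_R\}$ under product, an element that is $1$ on $C_j$ and $0$ on $C$. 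The problem is the case where $q_i=1$ forces $q_{i'}=1$, that is, an implication. In that case I instead use a $q$ with $q_i=0$ and $q_{i'}=1$ and subtract: $w-D_qw$ zeroes $C$ while keeping $C_j$. These differences lie in the linear part of $W_R$. Products of such operators (masks $D_q$ and complements $I-D_q$ applied to linear directions) then give the coordinate projection onto $C_j$ for linear elements. On $P$ every mask is the identity, so complements vanish there and $P$ is decoupled in differences.

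With these projections I would prove the decomposition in three steps. Zero coordinates: every $w\in W_R$ vanishes on $Z$. This holds because $W_R$ is the affine hull of the fibres, each of which is $D_q$-masked with $q_Z=0$. Linear factors: apply $D_q$ for some $q$ that vanishes on all nonconstant classes. Such a $q$ is the meet of masks, and the minimal element of the median semilattice is obtained via products. This shows that $W_R$ contains $w|_P$ padded with zeros, so $A_P=\pr_P(W_R)$ embeds. Each $w-(w|_P)$ then lies in the linear span of the masked directions, and the $C_j$-projection operators split it as $\sum_j \pi_{C_j}(w)$, each lying in $W_R$ relative to the $P$-part. This yields the product form. $L_{C_j}$ is linear because, for $q$ zero on $C_j$, the fibre contains a vector zero on $C_j$, and scaling is handled by the prime field.

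The converse is a direct check. For a product of the stated form, $D_q$ acts as the identity on $A_P$, as zero on $Z$, and on each $C_j$ as either the identity or zero, because $q$ is constant on the class. Linearity of $L_{C_j}$ makes zero admissible, so $D_qW\subseteq W$, and the remark after~\eqref{eq:activepair} gives an $e_p$-closed relation. Its canonical pair is $(Q,W)$: each block of $W$ is attained inside a fibre with the block active, which holds because the functions are nonconstant, and the affine hull then recovers $W$.

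The main obstacle is the middle step, separating two distinct nonconstant classes related by an implication. There I would rely on the difference trick $w-D_qw$, and I expect the bookkeeping of products of $D_q$ and $I-D_q$ to need care.
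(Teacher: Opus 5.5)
Your proposal is correct and follows essentially the paper's route: the direction space $L$ of $W_R$ is invariant under $D_q$ and $I-D_q$ for $q\in Q_R$, and suitable products of these give the projection onto each activity class, with $I-D_q$ used exactly in the implication case you identify. The factors are then assembled, and the only differences are cosmetic: you remove the offsets on all variable classes at once via $\prod_{q\in Q_R}D_q=D_{\mathbf 1_P}$, where the paper instead projects $D_qw_0-w_0$ onto each class $C$ using some $q$ with $q_C=0$; note that $\mathbf 1_P$ itself need not lie in $Q_R$, and majority-closure plays no role in the forward direction.
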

\begin{proof}
Write $W_R=w_0+L$. By Lemma~\ref{lem:activefibres}, $D_qL\subseteq L$ and $D_qw_0-w_0\in L$. Thus $L$ is invariant under $D_q$ and $I-D_q$ for every $q\in Q_R$.

These commuting diagonal maps generate the projection $E_C$ onto each class $C$ of equal activity functions, including nonempty $P$ and $Z$. For every other class $C'$, choose an activity vector distinguishing $C$ from $C'$. Use $D_q$ if $q$ is one on $C$, and $I-D_q$ otherwise. The product is one on $C$ and zero on every other class. Hence $E_CL\subseteq L$, and $L$ is the direct sum of its class projections.

For a variable class $C$, choose $q$ with $q_C=0$. Projecting $D_qw_0-w_0\in L$ onto $C$ gives $-w_{0,C}\in E_CL$, so the affine offset on $C$ vanishes. An offset may remain on $P$. All vectors in $\eta(R)$, and therefore in $W_R$, are zero on $Z$. This proves~\eqref{eq:activeblocks}.

Conversely, a product of the displayed form is invariant under every allowed activity mask, so~\eqref{eq:activepair} defines an invariant relation. The factor $A_P$ is realized under every activity vector. Each variable class is active under some activity vector and inactive under another. Choosing zero values in the other variable classes realizes every direction in $L_C$. Thus the affine hull of the realized value vectors is exactly the displayed product, proving canonicity.
\end{proof}

For example, when $p=2$ the relation
\[
 \{(\bot,\bot,\bot)\}\ \cup\
 \{(a,b,c)\in\Ftwo^3:a+b+c=0\}
\]
has activity relation $\{000,111\}$ and value space the even-parity plane. Its three coordinates form one variable activity class. The inactive fibre contains one tuple and the active fibre contains four.

We encode $Q_R$ by all its unary and binary projection tables in coordinate order. Majority-closed Boolean relations are determined by these projections, equivalently by 2-CNF formulas, as in the standard small-projection characterization \cite{Berman2010,IMMVW2010}. Encode $W_R$ by a reduced row echelon basis of its direction space and the unique coset representative whose pivot entries are zero. We call the resulting pair the \emph{canonical normal form}. Empty and nullary relations use the flags from Section~\ref{sec:regular}.

\begin{thm}[Canonical normal forms]\label{thm:normalcompiler}
Every $e_p$-closed relation on $k$ coordinates has a canonical normal form using $O(k^2)$ bits. Under the preservation promise, it is computable in polynomial time from an NFA and an explicit arity. The normal form of the relation generated by any explicit tuple set is also computable in polynomial time.
\end{thm}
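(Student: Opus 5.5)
The proof has three parts: a size bound, the NFA compiler, and the generator compiler. All three rest on Lemmas~\ref{lem:activefibres} and~\ref{lem:activeblocks}, which say that a nonempty $e_p$-closed relation $R$ is determined by the pair $(Q_R,W_R)$. For the size bound, the binary projection tables of $Q_R$ form $\binom k2$ tables of four bits each plus $k$ unary tables, so $O(k^2)$ bits. Since $Q_R$ is majority-closed, it is recovered from these tables as the set of vectors whose binary projections all lie in the tables. $W_R$ is a coset in $\Fp^k$ with $p$ fixed. A reduced row echelon basis of its direction space has at most $k$ rows of length $k$, and the pivot-zero representative is unique. So the encoding is canonical and uses $O(k^2)$ bits. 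Empty and nullary cases are flags.

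For the NFA compiler, assume $R=R(M,k)$ is $e_p$-closed.
\begin{itemize}
\item[(i)] \emph{Binary tables of $Q_R$.} For each pair of coordinates $i<j$ and each pair of activity bits, run the layered reachability of Lemma~\ref{lem:pincompiler} with positions $i,j$ restricted to letters of the prescribed activity (either $\bot$, or the letters of $\Fp$). Unary tables are obtained the same way. This is a polynomial number of reachability tests.
\item[(ii)] \emph{The space $W_R$.} $W_R$ is the affine hull of $\eta(R)$, so it suffices to find a polynomial-size subset $F\subseteq R$ with $\aff(\eta(F))=\aff(\eta(R))$. Lemma~\ref{lem:edgecompiler} gives a compact edge representation $F$, which generates $R$ under $e_p$ by Theorem~\ref{thm:classicaledge}. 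Each application of $e_p$ outputs either $\bot$-masked entries or $-\eta(a)+\eta(b)+\eta(c)$ masked by an activity vector. By Lemma~\ref{lem:activefibres}, masking by $D_q$ with $q\in Q_R$ keeps the value inside the affine hull once that hull is closed under the masks.
\item[(iii)] \emph{Closing under masks.} Compute $W_0=\aff(\eta(F))$. Then iterate $W\leftarrow\aff\bigl(W\cup\bigcup_q D_qW\bigr)$, where $q$ ranges over $\chi(F)$ together with the activity vectors of tuples obtained along the way.
\end{itemize}

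The main obstacle is making step (iii) rigorous: the iteration must stabilize at $W_R$ and not at a larger space. I would instead use Lemma~\ref{lem:activeblocks} directly. The class structure $P,Z,C_1,\ldots,C_s$ is read off the binary tables of $Q_R$: two coordinates lie in the same class exactly when their binary table is contained in $\{00,11\}$. Write $W^\ast=\aff(\eta(F))$ and $W^\ast=w_0+L^\ast$. Then set
\[
 W=\pr_P(W^\ast)\times\{0\}^Z\times\prod_j\Span\bigl(E_{C_j}(\eta(F))\bigr).
\]
Soundness: $W\subseteq W_R$, because $E_{C_j}$ and affine combinations preserve $W_R$ by the block form~\eqref{eq:activeblocks}. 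Completeness: $W$ has the product shape, so together with $Q_R$ it defines an $e_p$-closed relation $R(Q_R,W)$ by the converse in Lemma~\ref{lem:activeblocks}. This relation contains $F$, hence contains $\clos{F}{e_p}=R$. Therefore $W_R\subseteq W$, and the two are equal. The remaining computations are Gaussian elimination over $\Fp$, followed by conversion to reduced row echelon form and the pivot-zero representative.

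For an explicit generator set $G$, let $R=\clos{G}{e_p}$. Take $Q_R$ to be the majority closure of $\chi(G)$; its binary tables are the closures of the binary projections of $\chi(G)$ under majority, computable in a bounded power. This uses the fact that $\chi$ is a homomorphism onto the activity algebra, whose term operations are generated by majority on the last three arguments. The value part is then $W$ computed from $G$ exactly as above, with $G$ in place of $F$. Soundness and completeness follow by the same argument, using $\clos{G}{e_p}\subseteq R(Q_R,W)\subseteq R$. The second inclusion holds because the masks and affine combinations of elements of $G$ are realized by terms such as $\mu_p$ and $e_p$. This last point needs a short check that each projected direction $E_{C_j}\eta(g)$ is the value of a term applied to $G$.
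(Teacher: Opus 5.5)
Your proposal is correct. The size bound and the generator compiler are essentially the paper's argument. The paper likewise takes the unary and binary activity tables of $\chi(F)$, noting that majority closure leaves them unchanged because every binary Boolean relation is majority-closed, so your bounded-power closure does nothing. It finds $P$, $Z$ and the classes by comparing activity columns, sets $A_P=\aff\{\eta(f)_P\}$ and $L_C=\Span\{\eta(f)_C\}$, and proves both inclusions as you do.

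The closing remark you flag as needing a check is unnecessary. Read literally, it says $E_{C_j}\eta(g)$ is itself the value vector of a generated tuple. That is false whenever $0\notin A_P$, for instance in the worked example with $u_1+u_2=1$, since every tuple's $P$-part lies in $A_P$. For $R=\clos{G}{e_p}$, the inclusion $W\subseteq W_R$ already follows from the block form that Lemma~\ref{lem:activeblocks} gives for this closed relation, exactly as in your NFA soundness step. The inclusion $R(Q_R,W)\subseteq R$ then follows from Lemma~\ref{lem:activefibres}.

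The NFA compiler is where you genuinely take a different route. You extract a compact edge representation $F$ by Lemma~\ref{lem:edgecompiler}, use Theorem~\ref{thm:classicaledge} to get $\clos{F}{e_p}=R(M,k)$, and then run the generator construction on $F$. This is valid under the promise. Since $F$ already realizes all projections onto at most two coordinates, even your reachability step (i) could be read off $\chi(F)$.

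The paper does not use the few-subpowers machinery here. It computes $W_R=\aff\eta(R(M,k))$ directly by maintaining an affine span $H$. For each of the at most $k$ defining equations $\lambda\cdot v=c$ of $H$, it searches for an accepted word violating that equation. The search is layered reachability in the NFA augmented with an accumulator $z\in\Fp$, updated by $z\mapsto z+\lambda_i\eta(b)$. Each success raises the dimension, so $O(k^2)$ tests suffice.

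The paper's route is elementary and self-contained, and it gives the explicit bound $O(K^3L+K^4)$. It computes the exact affine hull for every NFA, using the promise only to reconstruct $R$ from $(Q_R,W_R)$. Your route is more modular, since it shows the NFA compiler factors through the generator compiler. However, it depends on the cited generation theorem (Theorem~\ref{thm:classicaledge}), and it needs the preservation promise already to certify that the computed space equals $W_R$.
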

\begin{proof}
There are $O(k^2)$ Boolean projection tables, each using at most four bits. The affine code uses at most $k^2+k$ field entries. Since $p$ is fixed, the combined bit size is $O(k^2)$. Lemma~\ref{lem:activefibres} proves that it determines the relation.

Let $R=R(M,k)$ be nonempty. For each unary or binary coordinate set and Boolean assignment on it, retain only letters with the specified activity at the pinned positions. Layered reachability computes the corresponding entry of $Q_R$, as in Lemma~\ref{lem:pincompiler}.

To compute $W_R$, start with the value vector of one accepted word and maintain the affine span $H$ of the vectors collected. Gaussian elimination supplies at most $k$ independent equations $\lambda\cdot v=c$ defining $H$. To find an accepted word violating such an equation, augment each layered NFA state by an accumulator $z\in\Fp$. A letter $b$ at position $i$ updates it by
\[
 z\longmapsto z+\lambda_i\eta(b).
\]
At the final layer accept only accumulators different from $c$. If $M$ has $s$ states, this graph has at most $(k+1)sp$ states. A successful test returns a value vector outside $H$. Add it and restart the equation tests. Each addition increases the affine dimension, so at most $k$ additions and $O(k^2)$ tests occur. When all equations hold on all accepted words, $H=W_R$. The procedure computes the exact affine hull for any NFA. Preservation is used to reconstruct $R$ from the two data sets, not to perform the reachability tests.

Now let $F\subseteq D_p^k$ be an arbitrary nonempty set of generators. The activity image of $\clos{F}{e_p}$ is the majority closure of $\chi(F)$. Every Boolean unary or binary relation is majority-closed. For a binary relation, the two sets of at least two input indices supporting the output bits of a majority intersect, so the output pair is one of the three input pairs. Therefore the unary and binary projection tables of the generated activity relation are exactly those of $\chi(F)$.

Comparing the activity columns of $F$ determines $P,Z$, and the classes $C$. Constant columns and equal columns are preserved by generation. Set
\begin{equation}\label{eq:activegenerators}
 A_P=\aff\{\eta(f)_P:f\in F\},\qquad
 L_C=\Span\{\eta(f)_C:f\in F\},
\end{equation}
and form~\eqref{eq:activeblocks}. The resulting relation is closed and contains $F$, hence contains $\clos{F}{e_p}$. Conversely, the factorization of $\clos{F}{e_p}$ given by Lemma~\ref{lem:activeblocks} must contain the spaces in~\eqref{eq:activegenerators}. Its activity relation is the one already computed. This gives the reverse inclusion. Column comparison and Gaussian elimination are polynomial in the explicit input length. Empty inputs and nullary arities are handled by their flags.
\end{proof}

The generator construction does not require its input tuples to witness the forks and small projections of an edge representation. It uses the explicit decomposition particular to $e_p$.

Figure~\ref{fig:normalcompilers} records the two constructions as symbolic algorithms. In each routine, $T$ is the list of unary and binary activity tables, rather than a list of all satisfying activity vectors. The notation $\operatorname{Can}(W)$ denotes the affine origin and reduced row echelon basis used in the canonical code.

\begin{figure}[htbp]
\begin{minipage}{\linewidth}
\small
\begin{tabbing}
\hspace{1.3em}\=\hspace{1.3em}\=\hspace{1.3em}\=\kill
\textsc{Compile-NFA}$(M,k)$\\
\>Find an accepted length-$k$ word $u$. If none exists, return the empty flag.\\
\>Build $T$ by reachability with every unary or binary activity assignment pinned.\\
\>Set $H=\{\eta(u)\}$.\\
\>Repeat:\\
\>\>Compute independent equations defining $H$.\\
\>\>For each equation, search for an accepted word violating it.\\
\>\>If all searches fail, return $(T,\operatorname{Can}(H))$.\\
\>\>For a returned word $w$, set $H=\aff(H\cup\{\eta(w)\})$.\\[1ex]
\textsc{Compile-Generators}$(F,k)$\\
\>If $F$ is empty, return the empty flag.\\
\>Read $T$ directly from the unary and binary projections of $\chi(F)$.\\
\>Compare activity columns to find $P,Z$, and the variable classes $C$.\\
\>Compute $A_P$ and the spaces $L_C$ by~\eqref{eq:activegenerators}.\\
\>Return $(T,\operatorname{Can}(A_P\times\{0\}^{Z}\times\prod_C L_C))$.
\end{tabbing}
\end{minipage}
\caption{Symbolic compilation for the prime-field family. At arity zero, a nonempty output is the flag for $\{()\}$.}
\label{fig:normalcompilers}
\end{figure}

For explicit running-time bounds, put $K=k+1$ and $L=1+|Q|+|\Delta|$ for the input NFA. We count elementary field and indexed-graph operations, with $p$ fixed. Using predecessor pointers for path reconstruction, each layered reachability query costs $O(KL)$. The activity tables require $O(K^2)$ queries, and the affine-hull loop requires $O(K^2)$ further queries. Recomputing equations by classical Gaussian elimination at each of its at most $K$ stages costs $O(K^4)$ in total. Thus \textsc{Compile-NFA} takes
\begin{equation}\label{eq:normalnfatime}
 O(K^3L+K^4)
\end{equation}
operations and $O(KL+K^2)$ working storage. For $m=|F|$, scanning the tables and comparing activity columns costs $O(mK^2)$. Row reduction on the factor matrices costs $O(mK^2+K^3)$, since the sum of the squared factor sizes is at most $k^2$. Hence \textsc{Compile-Generators} takes
\begin{equation}\label{eq:normalgentime}
 O(mK^2+K^3)
\end{equation}
operations. These are conservative bounds, and both algorithms return only the $O(k^2)$-bit code.

\begin{cor}[Normal-form calculus]\label{cor:normalcalculus}
Products, intersections, coordinate identifications, and ordered projections are computable in polynomial time on canonical normal forms. Automatic instances preserved by $e_p$ have canonical $O(k^2)$-bit representations of their solution relations on any ordered boundary of length $k$, computable in polynomial time in the instance size and $k$.
\end{cor}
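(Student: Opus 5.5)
The plan is to reduce every operation to Theorem~\ref{thm:normalcompiler}, by producing a polynomial-size explicit generating set of the result and then calling \textsc{Compile-Generators}. The basic tool is a generator decoder. From a canonical normal form $(T,\operatorname{Can}(W))$ we can write down polynomially many tuples of $R$ that generate $R$ under $e_p$. By Lemma~\ref{lem:activeblocks} it suffices to realize three kinds of data. First, every unary and binary activity pattern in $T$ must appear. Second, $A_P$ must be spanned affinely on $P$. Third, each $L_C$ must be spanned on its class.

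For the first kind, each pinned unary or binary activity assignment extends to a full vector of $Q_R$. We find such a vector by 2-SAT on the 2-CNF read from $T$, fixing coordinates one at a time. For each resulting $q$, the tuple $(q,D_qw_0)$ lies in $R$, where $w_0$ is the canonical origin. For the second and third kinds, take an activity vector $q$ active on a class $C$; it exists by 2-SAT. Add the tuples $(q,D_q(w_0+b))$ for basis vectors $b$ of the direction space. The generated relation has the same binary activity tables and contains the relevant affine spans, so by the generator part of Theorem~\ref{thm:normalcompiler} it equals $R$. All of this is polynomial in $k$.

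Products then follow by taking all pairs of an anchor from one factor with the generators of the other, as in~\eqref{eq:anchored}, and compiling. Intersections act componentwise on the canonical data. For $R\cap S$ the activity relation is the majority-closed relation defined by the conjunction of both 2-CNFs. I will not assume that the value space is $W_R\cap W_S$, since masking may enlarge the constraints. Instead, I recompute the intersection from Lemma~\ref{lem:activefibres}. A tuple lies in $R\cap S$ iff $\chi(t)\in Q_R\cap Q_S$ and $\eta(t)\in W_R\cap W_S$. This set is $e_p$-closed, and it has the form~\eqref{eq:activepair} with $Q=Q_R\cap Q_S$ and $W=W_R\cap W_S$. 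Here $W$ is affine and $D_q$-invariant for $q\in Q$. It may fail to be canonical, because its activity image may shrink or the fibres may not span $W$.

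Canonicity is therefore restored as follows. Compute $Q$ by 2-SAT tables, check nonemptiness, recompute the classes $P,Z,C$ from $Q$, and project $W$ onto the class factors. On a variable class $C$ the canonical factor is $E_C$ applied to the direction space together with masked offsets, and $A_P$ is $E_PW$. This step is the main obstacle: confirming that $E_C W$ with zero offsets really gives the realized hull requires the invariance argument of Lemma~\ref{lem:activeblocks} applied to $W$ under the new masks. If that argument is delicate, the fallback is to produce generators via the decoder, using witness vectors from Gaussian elimination in each fibre, and recompile.

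Coordinate identification is intersection with the relation $t_\alpha=t_\beta$. That relation is $e_p$-closed and has the normal form given by activity equality and value equality. Ordered projections are handled by appending fresh coordinates, identifying them, and projecting to a prefix. Prefix projection of generators is the generator set of the projection, because projection commutes with generation, so \textsc{Compile-Generators} applies.

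The automatic instance statement then follows the insertion procedure of Theorem~\ref{thm:maltsev}. Compile each slice by Theorem~\ref{thm:normalcompiler}, then repeatedly form products, impose $\sum_j r_j$ binary identifications, and project. Intermediate arity stays at most $|V|+\max_j r_j$, so every step is polynomial. The final ordered projection yields the $O(k^2)$-bit code.
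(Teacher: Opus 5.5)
\textbf{Gap: the product step.} The anchored set \eqref{eq:anchored} generates a product only for a \Mal\ operation. The activity component of $e_p$ is Boolean majority, and, as noted in the proof of Theorem~\ref{thm:normalcompiler}, the binary activity tables of $\clos{F}{e_p}$ are exactly those of $\chi(F)$.

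Take a coordinate $i$ in the first block and $j$ in the second. On $(i,j)$ the anchored set realizes only
$(\pr_iQ_1\times\{\chi(b_j)\})\cup(\{\chi(a_i)\}\times\pr_jQ_2)$.
This misses the pair $(1-\chi(a_i),1-\chi(b_j))$ whenever both coordinates have variable activity. Concretely, for $R_1=R_2=D_p$ the anchored generators realize only three of the four activity patterns. So \textsc{Compile-Generators} returns a proper subrelation of $D_p^2$.

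Your ordered projections go through appended fresh coordinates, i.e.\ a product with $D_p^k$, and the insertion procedure for automatic instances forms a product at every step. So the error reaches the second sentence of the corollary as well.

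\textbf{Repair.} Any of these works:
\begin{itemize}
\item Use the full product $F_1\times F_2$ of the decoded generator sets, as in Lemma~\ref{lem:edgecalculus}. It is still polynomial in size and realizes every cross binary activity pattern.
\item Do what the paper does and take $(Q_1\times Q_2,W_1\times W_2)$ directly. Its cross tables are products of unary tables, and $W_1\times W_2$ is already canonical, because the affine hull of a product is the product of the hulls.
\item For ordered projections, avoid products entirely: project the decoded generators onto $\mathbf b$ and compile. Coordinatewise generation commutes with ordered projections, including repeated coordinates.
\end{itemize}

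\textbf{What is sound.} Your generator decoder is correct, and routing operations through \textsc{Compile-Generators} is a legitimate alternative to the paper's direct manipulation of 2-CNF formulas and affine systems. Your intersection step is also correct, and the argument you flagged as the main obstacle is exactly the one the paper uses. The projector proof of Lemma~\ref{lem:activeblocks} only needs $D_qW\subseteq W$ for every $q\in Q$. It therefore applies verbatim to $W=W_R\cap W_S$ with $Q=Q_R\cap Q_S$, and the canonical value space is $W\cap\{v:v_Z=0\}$. The activity image cannot shrink below $Q$ once $W$ is nonempty, by the converse following Lemma~\ref{lem:activefibres}.
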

\begin{proof}
Products use the pairs $(Q_1\times Q_2,W_1\times W_2)$. On the same coordinates, an intersection is represented by
\[
 Q=Q_1\cap Q_2,\qquad W=W_1\cap W_2.
\]
Conjoin the 2-CNF formulas and the affine equation systems. If either system is inconsistent, the relation is empty. Otherwise, every $q\in Q$ preserves both $W_i$ under masking, and therefore preserves their intersection. The converse following Lemma~\ref{lem:activefibres} shows that every $q\in Q$ is realized. Hence the two separate feasibility tests suffice.

The intersection may retain redundant value directions on the coordinates $Z$ now forced inactive. Its canonical value space is
\begin{equation}\label{eq:activezero}
 W^0=W\cap\{v:v_Z=0\}.
\end{equation}
Indeed, the diagonal-projector proof of Lemma~\ref{lem:activeblocks} applies to the mask-stable affine space $W$. It gives an affine factor on the always-active coordinates and linear factors on the other activity classes, including $Z$. Realized tuples discard exactly the factor on $Z$, and their affine hull realizes all remaining factors. Compute $Z$ and all feasible unary and binary assignments by 2-SAT, and canonicalize~\eqref{eq:activezero} by Gaussian elimination.

Identifying coordinates $i,j$ adds $q_i=q_j$ to the activity formula and $v_i=v_j$ to the affine system. Every remaining activity mask preserves this field equality, so the preceding intersection argument applies.

For an ordered coordinate list $\mathbf b$, possibly with repetitions, put
\[
 Q'=\pr_{\mathbf b}Q,\qquad W'=\pr_{\mathbf b}W.
\]
Every projected tuple lies in $R(Q',W')$. Conversely, extend the activity of a tuple of $R(Q',W')$ to $q\in Q$ and its value to $w\in W$. The two extensions may initially be incompatible, but $D_qw\in W$. Masking preserves all selected values, since the value at every selected inactive position is zero. Thus $(q,D_qw)$ describes a full tuple realizing the proposed projection. Projection preserves canonicity because coordinate projection commutes with the affine hull and with $\eta$. The tables of $Q'$ are obtained by 2-SAT tests, and projecting and row-reducing an affine origin and basis gives $W'$.

Compile each automatic constraint by Theorem~\ref{thm:normalcompiler}, introduce its scope coordinates, identify them with the instance variables, and intersect. Free variables use the full activity relation and full field space. All scopes may repeat variables. The complete solution relation and its boundary projections are then computed by the preceding operations. Given a satisfying activity vector $q$ and an affine solution $w$, the legal pair $(q,D_qw)$ supplies a witness assignment. Empty and nullary cases use the designated flags.
\end{proof}

Figure~\ref{fig:normaljoin} gives one routine combining the product, equality, and projection steps. Intersection is obtained by equating corresponding coordinates of two equal-arity factors. Every forbidden unary or binary table entry contributes one clause to the 2-CNF formula. Thus the formula has $O(K^2)$ clauses when $K$ bounds the sum of the input and output arities plus one. Computing all feasible unary and binary assignments requires $O(K^2)$ 2-SAT tests, each taking $O(K^2)$ time by computing the strongly connected components of the implication graph. A spanning forest reduces the identification list to $O(K)$ equalities. Affine intersection, restriction to zero coordinates, and projection then cost $O(K^3)$ by classical elimination. Consequently, each product, intersection, coordinate identification, or ordered projection can be computed in $O(K^4)$ elementary operations. This also bounds the routine in Figure~\ref{fig:normaljoin} for $O(K^2)$ listed identifications.

\begin{figure}[htbp]
\begin{minipage}{\linewidth}
\small
\begin{tabbing}
\hspace{1.3em}\=\hspace{1.3em}\=\kill
\textsc{Join-Project}$((T_1,W_1),(T_2,W_2),E,\mathbf b)$\\
\>If either factor is empty, return the empty flag.\\
\>Form a 2-CNF formula $\Phi$ for $Q_1\times Q_2$.\\
\>Form an affine equation system $\mathcal A$ for $W_1\times W_2$.\\
\>Replace $E$ by a spanning forest of its undirected equality graph.\\
\>For every $(i,j)\in E$, add $q_i=q_j$ to $\Phi$ and $v_i=v_j$ to $\mathcal A$.\\
\>If $\Phi$ or $\mathcal A$ is inconsistent, return the empty flag.\\
\>Use unary 2-SAT tests to find $Z=\{i:\Phi\text{ forces }q_i=0\}$.\\
\>Let $W$ solve $\mathcal A$ together with $v_i=0$ for $i\in Z$.\\
\>Build $T'$ by pinning each unary or binary assignment on the positions in $\mathbf b$.\\
\>Return $(T',\operatorname{Can}(\pr_{\mathbf b}W))$.
\end{tabbing}
\end{minipage}
\caption{Normal-form composition. The equality list $E$ uses the concatenated input coordinates. Conflicting pins at repeated positions fail their 2-SAT test. A nonempty nullary result uses its designated flag.}
\label{fig:normaljoin}
\end{figure}

For an automatic instance of size $N$ from~\eqref{eq:size}, summing~\eqref{eq:normalnfatime} over its constraints costs $O((N+1)^4)$. The insertion algorithm makes $O(N+1)$ normal-form updates, followed by one projection to the requested $k$-coordinate boundary. The preceding bounds therefore give the conservative total
\begin{equation}\label{eq:normalinstancetime}
 O((N+k+1)^5).
\end{equation}
The exponent is independent of the fixed prime $p$. Field-operation constants may depend on $p$, and ordinary binary encodings of indices preserve polynomial time.

A worked example exhibits the interaction between the activity formula and the affine factors. Take $p=3$ and coordinates $(u_1,u_2,x_1,x_2,y_1,y_2)$. Let
\begin{equation}\label{eq:workedpair}
 \begin{aligned}
 Q&=\{(1,1,a,a,b,b):a,b\in\{0,1\},\ a\leq b\},\\
 W&=\{v\in\mathbb F_3^6:v_1+v_2=1,\ v_3=v_4,\ v_6=2v_5\}.
 \end{aligned}
\end{equation}
Here $P=\{1,2\}$, $C_1=\{3,4\}$, and $C_2=\{5,6\}$. The always-active factor is the affine line $u_1+u_2=1$. The variable factors are $\Span\{(1,1)\}$ and $\Span\{(1,2)\}$, and the activity formula links them by $a\Rightarrow b$. Lemma~\ref{lem:activeblocks} shows that $R=R(Q,W)$ is invariant. Its three fibres are
\[
\begin{array}{cclc}
 (a,b)& &\text{tuples, with }u,x,y\in\mathbb F_3&\text{size}\\
 (0,0)&& (u,1-u,\bot,\bot,\bot,\bot)&3\\
 (0,1)&& (u,1-u,\bot,\bot,y,2y)&9\\
 (1,1)&& (u,1-u,x,x,y,2y)&27.
\end{array}
\]
Thus $R$ has $39$ tuples, with independent field parameters in each active factor.

Project first to the ordered list $\mathbf b=(1,3,5,6,3)$. The repeated third coordinate of the original relation appears in the second and fifth output positions. The canonical pair becomes
\[
 \begin{aligned}
 Q_{\mathbf b}&=\{(1,a,b,b,a):a\leq b\},\\
 W_{\mathbf b}&=\{(u,x,y,2y,x):u,x,y\in\mathbb F_3\}.
 \end{aligned}
\]
The activity implication remains, while projection makes the always-active value $u$ free. The projected relation still has $39$ tuples, since the omitted coordinates are determined by the retained ones.

For an intersection on the original coordinates, require $y_1=\bot$. Its activity constraint is $b=0$, so $a\Rightarrow b$ also forces $a=0$. Intersecting the affine systems initially gives
\[
 \widetilde W=\{(u,1-u,x,x,0,0):u,x\in\mathbb F_3\}.
\]
This still contains the $x$ direction. The newly inactive set is $Z=\{3,4,5,6\}$, and~\eqref{eq:activezero} removes that direction to give
\[
 W^0=\{(u,1-u,0,0,0,0):u\in\mathbb F_3\}.
\]
The resulting relation consists of the three tuples $(u,1-u,\bot,\bot,\bot,\bot)$. This calculation makes explicit why the affine intersection must be canonicalized after solving the activity constraints.

A primitive positive, or pp-, definition is an existentially quantified conjunction of relation atoms and equalities. To obtain short definitions, let $\Gamma_p$ contain every lifted unary and binary Boolean relation $\chi^{-1}(B)$, all active singletons $\{c\}$ with $c\in\Fp$, the unary relation $Z_*=\{\bot,0\}$, and
\begin{equation}\label{eq:switchedaddition}
 S_p=\{(\bot,\bot,\bot)\}\cup\{(a,b,c)\in\Fp^3:a+b=c\}.
\end{equation}
These are finitely many relations of arity at most three, all preserved by $e_p$.

\begin{prop}[Constructive short definitions]\label{prop:normaldefinitions}
The pp-definable relations over $\Gamma_p$ are exactly the $e_p$-closed relations. A pp-definition with $O(k^2)$ atoms and auxiliary variables is computable in polynomial time from the normal form of a $k$-ary relation. The quadratic normal-form bit bound is optimal in the worst case.
\end{prop}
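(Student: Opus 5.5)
The proof has three parts: soundness of pp-definitions over $\Gamma_p$, the construction of a pp-definition from the normal form, and a counting lower bound. For soundness, every relation in $\Gamma_p$ is $e_p$-closed, and pp-definitions preserve polymorphisms, so every pp-definable relation is $e_p$-closed. It remains to check membership of each generating relation. Lifted Boolean unary and binary relations are closed because $\chi$ is a homomorphism onto the majority algebra, and every unary or binary Boolean relation is majority-closed. The singletons are closed because $e_p$ is idempotent. The relations $Z_*$ and $S_p$ are closed by the description \eqref{eq:activepair}. For $S_p$ take $Q=\{000,111\}$ and $W=\{a+b=c\}$, which is linear and hence mask-stable. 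For $Z_*$ take $Q=\{0,1\}$ and $W=\{0\}$.

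For completeness and the construction, start from the canonical pair $(Q_R,W_R)$ and use Lemma~\ref{lem:activefibres}, which gives $R=R(Q_R,W_R)$. The activity part $\chi(t)\in Q_R$ is a conjunction of the unary and binary tables, hence of $O(k^2)$ lifted atoms. For the value part, use the factorization of Lemma~\ref{lem:activeblocks}.
\begin{itemize}
\item On $P$, every coordinate is always active. Each affine equation $\sum_i\lambda_i x_i=c$ is expressed by introducing partial-sum variables $s_1,\ldots$ constrained to be active, chained through $S_p$ atoms.
\item Scalar multiples $\lambda x$ come from repeated addition, which costs $O(p)$ atoms per coefficient and is constant for fixed $p$.
\item The constant $c$ is imposed by an active singleton.
\item On $Z$, the activity tables already force $\bot$.
\item For a variable class $C$, the coordinates are simultaneously active or inactive. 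Write $L_C$ by homogeneous equations $\sum\lambda_i x_i=0$. Realize them with $S_p$ chains whose auxiliary variables are tied by lifted binary equality-of-activity atoms to a representative coordinate of $C$. In the inactive case every partial sum is then $\bot$ and the equation holds vacuously. In the active case it is the field equation. The final comparison to zero uses $Z_*$ on the sum variable, which accepts $\bot$ or $0$.
\end{itemize}
Each class uses at most $|C|$ equations of length $|C|$ from the reduced echelon basis, so the total is $O(k^2)$ atoms and auxiliary variables. Gaussian elimination supplies the equations in polynomial time. To confirm that the formula defines exactly $R$, observe that its projection to the original coordinates imposes $\chi\in Q_R$ and $\eta\in W_R$ coordinatewise. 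The masking identity $D_qW_R\subseteq W_R$ guarantees that vanishing inactive values are consistent.

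For optimality, count canonical pairs and show that some family needs $\Omega(k^2)$ bits. Already inside the Mal'tsev-free world, take $Q=\{1^k\}$. Then $W$ ranges over all affine subspaces of $\Fp^k$, and there are $p^{\Omega(k^2)}$ subspaces of dimension $k/2$. These give distinct $e_p$-closed relations, so any encoding needs $\Omega(k^2)$ bits in the worst case. Alternatively one can count 2-CNF-definable activity relations, which also number $2^{\Omega(k^2)}$.

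The main obstacle I expect is the variable-class gadget: the auxiliary sum variables must be inactive exactly when the class is. The lifted activity-equality atoms should handle this, but I must check that the $S_p$ chain does not accidentally admit mixed activity and that this remains within the stated atom count.
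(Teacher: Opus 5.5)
Your proposal is correct and follows essentially the paper's route: lifted unary and binary activity tables, $S_p$ addition chains on each block of the factorization in Lemma~\ref{lem:activeblocks} with $Z_*$ closing the homogeneous chains, and a counting lower bound over fully active affine relations (the paper uses the graphs $R_A$ of linear maps, which are a subfamily of your $\lfloor k/2\rfloor$-dimensional subspaces). The worry you flag is already settled by $S_p$ itself, which forces its three activity bits to be equal, so your extra activity-equality atoms are redundant; you should also note that the empty relation is pp-defined by the empty lifted unary relation in $\Gamma_p$.
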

\begin{proof}
Use the lifted unary and binary tables to define the activity relation. Gaussian elimination gives at most $|P|$ affine equations on $P$ and $|C|$ homogeneous equations on each variable class $C$ in~\eqref{eq:activeblocks}. Realize each equation by a chain of $S_p$ constraints adding its nonzero terms. Coefficients are implemented by repetitions modulo $p$, at constant cost for fixed $p$. For an affine equation on $P$, pin the initial accumulator to the active zero and the final accumulator to the required constant. For a homogeneous equation on $C$, put both endpoint accumulators in $Z_*$.

Each addition constraint equates its three activity bits. Thus a homogeneous chain is entirely inactive or entirely active. The inactive chain is satisfied, while an active chain has both endpoints equal to the field zero and enforces the homogeneous equation. The activity formula already ties every coordinate of $C$, including coordinates absent from an individual equation. Tautological zero equations are omitted, and an inconsistent equation is represented using the empty unary relation, which belongs to $\Gamma_p$.

There are at most $|P|^2+\sum_C|C|^2\leq k^2$ equation entries. This proves the construction bound. Every invariant relation has the resulting definition, and every pp-definition over $\Gamma_p$ is preserved by $e_p$, proving the characterization.

For sharpness, let $k=2m$. The active relations
\[
 R_A=\{(x,Ax):x\in\Fp^m\}\subseteq D_p^{2m},\qquad A\in\Fp^{m\times m},
\]
are invariant and are distinct for distinct matrices. An injective binary encoding must distinguish $p^{m^2}$ relations. If its maximum length is $b$, then $2^{b+1}>p^{m^2}$, giving $b\geq m^2\log_2p-1$. For odd arities, append an active coordinate fixed to zero. This is $\Omega(k^2)$ for fixed $p$.
\end{proof}

The definition bound counts atoms and auxiliary variables. An ordinary binary encoding of the resulting formula uses $O(k^2\log(k+2))$ bits for the variable indices, whereas the canonical table-and-matrix code uses $O(k^2)$ bits. The construction from generators in Theorem~\ref{thm:normalcompiler}, followed by Proposition~\ref{prop:normaldefinitions}, gives the short-definition algorithm for this family.

\section{Graphoid automata and effective boundaries}\label{sec:graphs}

We use the canonical unitary relational semantics of graphoid automata from \cite{BK2004,BK2008}. Let $\Sigma$ be a finite doubly ranked alphabet. A label $\sigma\in\Sigma$ has an input rank $p_\sigma$ and an output rank $q_\sigma$. A finite $\Sigma$-graph $H$ has a finite vertex set, labelled hyperedges with ordered input and output incidence lists of the appropriate ranks, and ordered external input and output lists $\mathbf i_H$ and $\mathbf o_H$. Incidence and boundary lists may repeat vertices. Isolated vertices and empty boundaries are allowed.

A graphoid automaton is $A=(\Sigma,D,\delta,I,T)$, where $D$ is a finite state set,
\[
 \delta_\sigma\subseteq D^{p_\sigma}\times D^{q_\sigma}
\]
is the transition relation for $\sigma$, and $I,T\subseteq D^*$ are regular languages. A run on $H$ is a map $h:V(H)\to D$ satisfying the transition relation at every edge. It is accepting when
\[
 h(\mathbf i_H)\in I,\qquad h(\mathbf o_H)\in T.
\]
This vertex-assignment description is the existential relational interpretation of graph gluing. In particular, serial composition identifies matching output and input vertices, while parallel composition takes disjoint unions and concatenates the boundary lists.

Let $\mathbf b_H$ be the concatenation of the input and output lists. The \emph{boundary relation} of $H$ is
\begin{equation}\label{eq:graphboundary}
 R_H=\{h(\mathbf b_H):h\text{ is a run on }H\}.
\end{equation}
This relation records extendibility before applying $I$ and $T$. One can also use any other explicit ordered list of vertices as the boundary. After imposing the regular acceptance constraints, the corresponding projection records the boundary tuples of accepting runs.

\begin{thm}[Recognition and boundary representations]\label{thm:graphoid}
Let $A$ be a fixed graphoid automaton. Suppose a common edge operation $e$ on $D$ preserves every transition relation and every length slice of $I$ and $T$. Recognition of finite input graphs is polynomial-time, with no restriction on their width. Compact representations of their boundary relations and of the boundary tuples of accepting runs are computable in polynomial time.

For an $\ell$-edge operation, a boundary of length $k\geq1$ has a representation with $O(k^{\ell-1})$ tuples and $O(k^\ell)$ bits. If the common operation is \Mal, at most $2kd^2$ tuples suffice, giving $O(k^2)$ bits for fixed $D$.
\end{thm}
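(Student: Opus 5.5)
The plan is to reduce recognition to an automatic constraint instance and then invoke Theorem~\ref{thm:edge} (and Theorem~\ref{thm:maltsev} in the \Mal\ case). Given an input $\Sigma$-graph $H$, take $V=V(H)$. For every hyperedge with label $\sigma$ and concatenated incidence list $(x_1,\ldots,x_{p_\sigma},y_1,\ldots,y_{q_\sigma})$, add a constraint with that explicit scope whose relation is $\delta_\sigma$. Because $A$ is fixed, $\delta_\sigma$ has bounded arity, and a trivial NFA accepting exactly the finitely many words of $\delta_\sigma$ has constant size. Scopes with repeated vertices cause no difficulty, since the theorems allow repeated variables. For acceptance, add one constraint with scope $\mathbf i_H$ and the fixed NFA for $I$, and one with scope $\mathbf o_H$ and the fixed NFA for $T$. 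The size parameter~\eqref{eq:size} is then linear in the encoding of $H$. By hypothesis $e$ preserves every used slice: each $\delta_\sigma$ and each length slice of $I$ and $T$.

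$H$ is accepted exactly when this instance is satisfiable, so Theorem~\ref{thm:edge} decides recognition in polynomial time. Nowhere is a tree decomposition used, so there is no width restriction. For the boundary relation $R_H$, drop the two acceptance constraints and request the ordered boundary list $\mathbf b_H$. Theorem~\ref{thm:edge} explicitly allows repetitions and returns a representation with at most $B_{\ell,d}(k)$ tuples. For boundary tuples of accepting runs, keep the acceptance constraints and project onto $\mathbf b_H$ in the same way. Empty boundaries are handled by the nullary flags.

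For the size bounds, $B_{\ell,d}(k)=2kd^2+\sum_{t<\ell}\binom kt d^t=O(k^{\ell-1})$ when $D$ and $\ell$ are fixed. Each tuple takes $k\lceil\log_2 d\rceil=O(k)$ bits, giving $O(k^\ell)$ bits. In the \Mal\ case, Theorem~\ref{thm:maltsev} produces a frame for $\Sol_{\mathbf b}$. Pruning it to one witness pair per fork, as in the definition of compact representation, leaves at most $2kd^2$ tuples, hence $O(k^2)$ bits.

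The main point needing care is the $k=0$ or empty-relation bookkeeping and checking that applying the ordered-projection lemmas with repeated coordinates gives exactly $R_H$ as defined in~\eqref{eq:graphboundary}, that is, that the solution relation of the instance is precisely the set of runs. I expect this to be routine, since runs are by definition assignments satisfying all edge constraints.
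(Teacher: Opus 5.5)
Your proposal is correct and follows essentially the paper's own proof. Both use one variable per vertex, a fixed-size trie NFA for each hyperedge's transition relation, and the NFAs for $I$ and $T$ on the ordered external lists, then apply Theorem~\ref{thm:edge} or Theorem~\ref{thm:maltsev}, obtaining $R_H$ by omitting the acceptance constraints and projecting. Your explicit estimate $B_{\ell,d}(k)=O(k^{\ell-1})$ and your pruning of the \Mal\ frame to $2kd^2$ tuples simply spell out the size and bit bounds that the paper states in one line.
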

\begin{proof}
Use one variable per vertex. Each hyperedge supplies a constraint with its concatenated incidence list as scope and its transition relation as the allowed relation. Since $A$ is fixed, a trie NFA for each transition relation has fixed size. The two ordered external lists supply the constraints described by NFAs for $I$ and $T$.

The resulting automatic instance has size polynomial in the explicit graph description and has exactly the accepting runs as solutions. Apply Theorem~\ref{thm:edge}, or Theorem~\ref{thm:maltsev}. For $R_H$ itself, omit the two acceptance constraints and project the local solution relation. All operations allow repeated coordinates. The number of bits follows by storing the bounded number of length-$k$ tuples over the fixed alphabet $D$.
\end{proof}

The hypothesis involves the local transitions and the boundary languages together. Regularity of $I,T$ alone supplies no common polymorphism. In the fixed-automaton setting, Proposition~\ref{prop:dfa} provides a finite check for a proposed operation when deterministic boundary automata are supplied.

\begin{cor}[Effective relational composition]\label{cor:composition}
For graphs whose transition relations share a fixed edge operation, serial composition, parallel composition, and boundary identification can be implemented on compact boundary representations in polynomial time. Inclusion and equality of the relations of two specified finite graphs are also decidable in polynomial time, with a boundary counterexample whenever inclusion fails.
\end{cor}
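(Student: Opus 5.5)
We reduce each graph operation to the calculus of Lemma~\ref{lem:edgecalculus}, and the comparison statements to Corollary~\ref{cor:comparison}. Throughout, a boundary relation is stored as a compact edge representation in the sense of Definition~\ref{def:edgeframe}, or as a frame in the \Mal\ case. The relations $R_H$ are $e$-closed. Indeed, $R_H$ is the projection of the solution relation of an instance whose constraint relations are transition relations preserved by $e$, and projections and intersections of $e$-closed relations are $e$-closed. Hence every relation produced below stays inside the class on which the calculus applies.

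\emph{Parallel composition.} For graphs $H_1,H_2$ the disjoint union has boundary list $\mathbf i_{H_1}\mathbf i_{H_2}\mathbf o_{H_1}\mathbf o_{H_2}$. Its boundary relation is therefore an ordered projection, in fact a coordinate permutation, of $R_{H_1}\times R_{H_2}$. We compute the product and then the ordered projection by Lemma~\ref{lem:edgecalculus}.

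\emph{Serial composition.} Identify $\mathbf o_{H_1}$ with $\mathbf i_{H_2}$, which requires equal lengths. Form $R_{H_1}\times R_{H_2}$ and impose the equalities between corresponding output coordinates of the first factor and input coordinates of the second. Each equality is one application of \textnormal{Next} with $S=\Eq_D$. Finally project to $\mathbf i_{H_1}\mathbf o_{H_2}$. The key correctness point is that the existential semantics matches this computation. A run on the composed graph restricts to runs on both pieces that agree on the glued vertices. Conversely, two runs agreeing on the glued boundary combine into a run on the composed graph. This holds because every hyperedge lies in one piece and repeated boundary vertices are already encoded by equal coordinates.

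\emph{Boundary identification.} Identifying two boundary vertices is a single equality update followed by projection onto the new boundary list, which may contain repetitions. Every step is polynomial in the boundary lengths. Pruning keeps the representation within the size bound $B_{\ell,d}$. In the \Mal\ case Lemmas~\ref{lem:equality} and \ref{lem:maltsevcalculus} play the same role.

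\emph{Inclusion and equality.} For two specified graphs, first compute representations of $R_{H_1}$ and $R_{H_2}$ by Theorem~\ref{thm:graphoid}. These relations are boundary solution relations of automatic instances over the same $D,e$, so Corollary~\ref{cor:comparison} applies directly. We have $R_{H_1}\subseteq R_{H_2}$ if and only if every tuple of the representation $F_1$ of $R_{H_1}$ lies in $R_{H_2}$. Membership of a tuple is tested by successive singleton restrictions of the representation of $R_{H_2}$. Any tuple of $F_1$ that fails this test is an explicit boundary counterexample. Equality is mutual inclusion.

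The main obstacle is to check that the equivalence $R\subseteq S\iff F\subseteq S$ is legitimate. This requires $\clos{F}{e}=R$ together with closure of $S$ under $e$, and both hold: the first by Theorem~\ref{thm:classicaledge}, the second by the $e$-closedness noted above. The remaining care is bookkeeping. We must handle empty and nullary boundaries through their flags, and reject mismatched boundary lengths in serial composition.
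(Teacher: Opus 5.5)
Your proposal is correct and follows the paper's proof. Parallel composition is a product followed by a coordinate permutation, serial composition is a product with $\Eq_D$-updates via \textnormal{Next} and a projection away from the glued interface, and boundary identification is an equality update, all carried out by Lemma~\ref{lem:edgecalculus}; comparison comes from Corollary~\ref{cor:comparison} via $\clos{F}{e}=R$. You also make explicit two points the paper leaves to ``the semantics of gluing'': that each $R_H$ is $e$-closed, and that runs of the two pieces agreeing on the matched boundary (including identifications forced by repeated boundary vertices) combine into a run of the composed graph.
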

\begin{proof}
For parallel composition, take a product and permute the coordinates into the input-output order. For serial composition, take a product, identify every matched output-input pair, and project away the glued interface. Boundary identifications are equality updates. Lemma~\ref{lem:edgecalculus} performs these operations. The semantics of gluing proves that the resulting relations are the boundary relations of the composed graphs. Comparison follows from Corollary~\ref{cor:comparison}.
\end{proof}

The comparison statement concerns two finite graph relations of the same boundary type. It is not a statement about equivalence of the languages recognized by two graphoid automata on all input graphs. The summaries may be evaluated along a supplied composition expression, but the polynomial-time recognition theorem itself does not require such an expression or a width bound.

For the family from Section~\ref{sec:normalforms}, the boundary relation has a canonical code and a small graph realization over a fixed transition alphabet.

\begin{cor}[Canonical quadratic boundaries]\label{cor:normalboundaries}
For a fixed graphoid automaton on $D_p$ whose transitions and boundary slices are preserved by $e_p$, canonical $O(k^2)$-bit boundary representations are computable in polynomial time. Their composition is computable by 2-SAT and Gaussian elimination. Over the fixed transition alphabet $\Gamma_p$, an equivalent graph with $O(k^2)$ vertices and edges is constructible from the code of a $k$-coordinate boundary relation. The quadratic bit bound is optimal for this alphabet.
\end{cor}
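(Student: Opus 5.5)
The plan is to reduce everything to the normal-form machinery of Section~\ref{sec:normalforms}, using the graph-to-instance translation from the proof of Theorem~\ref{thm:graphoid}. Given a finite input graph $H$, introduce one variable per vertex. Each hyperedge labelled $\sigma$ contributes a constraint whose scope is its concatenated incidence list and whose relation is $\delta_\sigma$, presented by a fixed-size trie NFA. The boundary languages $I$ and $T$ contribute constraints on $\mathbf i_H$ and $\mathbf o_H$. By hypothesis every used slice is $e_p$-closed, so Corollary~\ref{cor:normalcalculus} applies directly. It yields, in polynomial time, the canonical $O(k^2)$-bit pair $(T,\operatorname{Can}(W))$ for the projection onto $\mathbf b_H$, or onto any explicit ordered list, of either the run relation or the accepting-run relation. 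We omit the $I,T$ constraints for $R_H$ itself.

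For composition we follow the proof of Corollary~\ref{cor:composition}, but replace Lemma~\ref{lem:edgecalculus} by the normal-form operations of Corollary~\ref{cor:normalcalculus}. Parallel composition is a product followed by an ordered projection that permutes the coordinates. Serial composition and boundary identification are a product, equality identifications and a projection. Each is realized by one call to \textsc{Join-Project} in Figure~\ref{fig:normaljoin}, which uses only 2-SAT and Gaussian elimination. Correctness of these operations is the gluing semantics already invoked for Corollary~\ref{cor:composition}.

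For the graph realization, we start from the code of a $k$-ary boundary relation $R$ and apply Proposition~\ref{prop:normaldefinitions} to obtain a pp-definition over $\Gamma_p$ with $O(k^2)$ atoms and auxiliary variables. We convert this formula into a $\Gamma_p$-graph as follows. Every variable, whether a boundary coordinate or auxiliary, becomes a vertex. Every atom becomes a hyperedge with the appropriate incidence list; for instance, an atom of $S_p$ is an edge of rank splitting $(2,1)$ or $(3,0)$, under whatever fixed double ranking we assign to $\Gamma_p$. Equality atoms are removed by merging vertices. The external list is the ordered boundary list, with repetitions allowed, split into input and output lists as required. By the existential semantics~\eqref{eq:graphboundary}, the boundary relation of this graph is exactly the relation defined by the pp-formula, so it equals $R$. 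The vertex and edge counts are $O(k^2)$. The main point to check is the unary relations and the empty relation. Unary atoms become rank-$(1,0)$ edges. An inconsistent normal form uses the empty unary relation in $\Gamma_p$, so a single edge labelled by it produces an empty boundary relation. Nullary flags are handled by a graph with or without such an edge.

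Optimality comes from the sharpness part of Proposition~\ref{prop:normaldefinitions}. Each active relation $R_A$ with $A\in\Fp^{m\times m}$ is $e_p$-closed. By the realization just described, $R_A$ arises as the boundary relation of some $\Gamma_p$-graph. This gives $p^{m^2}$ distinct boundary relations of arity $2m$, so any injective encoding of boundary relations needs $\Omega(k^2)$ bits. The step I expect to require the most care is the realization step. There one must confirm that the transition alphabet $\Gamma_p$, viewed as a doubly ranked alphabet, is itself preserved by $e_p$ and fits the hypotheses. That check is exactly the closure statement recorded just before Proposition~\ref{prop:normaldefinitions}.
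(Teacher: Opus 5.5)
Your proposal is correct and follows essentially the same route as the paper. You translate the graph into the vertex-assignment instance and apply Corollary~\ref{cor:normalcalculus}, realize the code by turning the pp-definition of Proposition~\ref{prop:normaldefinitions} into a $\Gamma_p$-graph with one vertex per variable and one edge per atom, and transfer the counting bound for the relations $R_A$. Your extra details are consistent with the paper: composition via \textsc{Join-Project}, merging vertices for equality atoms, and the empty-relation and nullary cases. One small point: $e_p$-preservation of $\Gamma_p$ is not needed for the realization itself, since the existential semantics gives the boundary relation regardless. It only matters for placing the realizing graphs inside the corollary's hypotheses.
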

\begin{proof}
Apply Corollary~\ref{cor:normalcalculus} to the vertex-assignment instance in the proof of Theorem~\ref{thm:graphoid}. The same argument includes the regular acceptance constraints when required. To realize a normal form as a graph, take one vertex for each free or existential variable in Proposition~\ref{prop:normaldefinitions} and one edge for each relation atom. Give an arity-$r$ relation the rank $(r,0)$, and use the free variables in order as the boundary, with the prescribed input-output split. Repeated variables become repeated incidences, and all intermediate variables are internal vertices. The pp-definition and the graph then have the same boundary relation. The relations $R_A$ used in Proposition~\ref{prop:normaldefinitions} are realizable over this single fixed alphabet, so its lower bound applies.
\end{proof}

The graph replacement in Corollary~\ref{cor:normalboundaries} uses $\Gamma_p$. The original transition alphabet need not contain all these relations. The size bound concerns vertices and edges, while the canonical boundary code has the stated quadratic bit size.

The quadratic \Mal\ bound also cannot be improved for exact boundary encodings across the whole class. The argument uses a fixed Boolean automaton and a family of graphs of polynomial size.

\begin{prop}[Quadratic lower bound]\label{prop:lowerbound}
There is a fixed two-state graphoid automaton with a common \Mal\ polymorphism such that any injective binary encoding of all its $k$-coordinate boundary relations requires $\Omega(k^2)$ bits in the worst case.
\end{prop}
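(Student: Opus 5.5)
The plan mirrors the sharpness argument in Proposition~\ref{prop:normaldefinitions}, but over $\Ftwo$ with the Boolean affine \Mal\ operation $p(x,y,z)=x+y+z$. We take $D=\{0,1\}$ and a single doubly ranked label $\sigma$ of rank $(3,0)$. Its transition relation is
\[
 \delta_\sigma=\{(a,b,c)\in\Ftwo^3:a+b=c\},
\]
together with constant labels of rank $(1,0)$ whose transition relations are $\{0\}$ and $\{1\}$. The boundary languages are $I=T=D^*$. Every such relation is an affine subspace or coset, so it is preserved by $p$. Every length slice of $D^*$ is preserved as well, so the automaton has a common \Mal\ polymorphism and two states.

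\emph{Realizability.} Fix $k=2m$ and a matrix $A\in\Ftwo^{m\times m}$. We build a graph $H_A$ with boundary vertices $x_1,\ldots,x_m,y_1,\ldots,y_m$, placed in order in the output list (empty input list). For each row $j$, we compute $\sum_{i:A_{ji}=1}x_i$ with a chain of $\sigma$-edges on fresh internal accumulator vertices. The chain starts at a vertex pinned to $0$ by a constant edge, and its final accumulator is identified with $y_j$; identification is allowed because incidence lists may repeat vertices. Runs are exactly assignments satisfying these linear equations. Hence
\[
 R_{H_A}=\{(x,Ax):x\in\Ftwo^m\}.
\]
The graph has $O(m^2)$ vertices and edges, so the family has polynomial size. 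For odd $k$ we append one boundary vertex pinned to $0$.

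\emph{Counting.} Distinct matrices give distinct relations, since $Ae_i$ is recovered as the unique $y$ paired with $x=e_i$. So there are at least $2^{m^2}$ distinct $k$-coordinate boundary relations. An injective binary encoding with maximum length $b$ has at most $2^{b+1}-1$ codewords. This forces $b\geq m^2-1=\Omega(k^2)$.

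The main point to check is that the boundary relation $R_H$ records extendibility over internal vertices exactly as stated in~\eqref{eq:graphboundary}. Under the existential vertex-assignment semantics this is immediate, because accumulators are determined by the boundary values. The lower bound matches the $O(k^2)$ upper bound from Theorem~\ref{thm:graphoid}.
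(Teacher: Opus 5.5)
Your proposal is correct and matches the paper's proof: the same affine construction over $\Ftwo$ with the transitions $a+b=c$ and $\{0\}$, accumulator chains realizing $y=Ax$ with $O(m^2)$ vertices and edges, padding by a zero coordinate for odd $k$, and the count of $2^{m^2}$ distinct relations forcing $b\geq m^2-1$. Your extra constant label $\{1\}$ is harmless but unused, and your empty-row case (the pinned start vertex being $y_j$ itself) agrees with the paper's $Z$-edge on $y_j$.
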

\begin{proof}
Take $D=\Ftwo$, with $p(x,y,z)=x+y+z$. Use the ternary transition
\[
 X=\{(a,b,c):a+b=c\}
\]
and the unary transition $Z=\{0\}$. Both are preserved by $p$. For an integer $m$, every binary $m\times m$ matrix $A$ defines the relation
\[
 R_A=\{(x,y)\in\Ftwo^m\times\Ftwo^m:y=Ax\}.
\]
It is realized by a graph with boundary $(x_1,\ldots,x_m,y_1,\ldots,y_m)$. For each row of $A$, initialize an accumulator with a $Z$-edge and use $X$-edges to add the variables in that row, with the final accumulator identified with the corresponding $y$-vertex. An empty row is enforced by a $Z$-edge on $y$. This uses $O(m^2)$ vertices and edges. Labels may be taken of ranks $(3,0)$ and $(1,0)$, so all these are graphs over one fixed alphabet.

Distinct matrices give distinct relations, since their values differ on some Boolean vector. Hence there are at least $2^{m^2}$ different boundary relations of length $2m$. An injective encoding with maximum length $b$ uses fewer than $2^{b+1}$ binary strings, so $b\geq m^2-1$. For odd boundary lengths, add one coordinate fixed to zero. Thus the lower bound is $\Omega(k^2)$ for arbitrary $k$. The regular acceptance languages may be taken to be $D^*$, so they impose no additional restriction.
\end{proof}

This is an information bound for exact relations. It does not impose a lower bound on the time needed for a single recognition query or on encodings of a more restricted family of graph relations.

\section{A three-state separation from counting}\label{sec:counting}

For a finite group $G$, every nonempty relation preserved by the heap operation $p(x,y,z)=xy^{-1}z$ is a coset of a subgroup of a power of $G$. Indeed, translating such a relation by one of its elements gives a set containing the identity and closed under multiplication and inverse. Projections of group cosets have uniform nonempty fiber sizes. General \Mal\ relations need not have this property, as the following example makes explicit.

Let $D=\{0,1,2\}$ and define $\chi:D\to\Ftwo$ by $\chi(0)=0$ and $\chi(1)=\chi(2)=1$. Set
\begin{equation}\label{eq:inflated}
 p(a,b,c)=
 \begin{cases}
 c,&a=b,\\
 a,&a\ne b\text{ and }b=c,\\
 \chi(a)+\chi(b)+\chi(c),&\text{otherwise},
 \end{cases}
\end{equation}
where the last value is evaluated in $\Ftwo$ and represented by $0$ or $1$ in $D$.

\begin{lem}\label{lem:inflated}
The operation~\eqref{eq:inflated} is \Mal\ and satisfies
\[
 \chi(p(a,b,c))=\chi(a)+\chi(b)+\chi(c).
\]
It preserves the five-element relation
\[
 E=\{(a,b)\in D^2:\chi(a)=\chi(b)\},
\]
which is not a coset in the square of any group structure on $D$ and has nonuniform projection fibers.
\end{lem}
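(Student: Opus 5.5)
We check the claims in the order stated: the \Mal\ identities, the activity identity, preservation of $E$, and then the two structural failures. All are finite verifications on a three-element domain, organised by the three cases of~\eqref{eq:inflated}.

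\emph{\Mal\ identities.} For $p(y,y,x)$ the first case applies and returns $x$. For $p(x,y,y)$: if $x=y$ the first case returns $y=x$, and otherwise the second case returns $x$.

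\emph{Activity identity.} We check $\chi(p(a,b,c))=\chi(a)+\chi(b)+\chi(c)$ case by case.
In the first case $a=b$, so $\chi(a)+\chi(b)=0$ and the output is $c$, whose activity is $\chi(c)$.
In the second case $b=c$, and the output $a$ has activity $\chi(a)=\chi(a)+\chi(b)+\chi(c)$.
In the third case the output is the element $0$ or $1$ of $D$ representing the field sum. Since $\chi(0)=0$ and $\chi(1)=1$, its activity is that sum.

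\emph{Preservation of $E$.} This follows from the activity identity. If $(a_j,b_j)\in E$ for $j=1,2,3$, then the two output coordinates have activities $\sum_j\chi(a_j)$ and $\sum_j\chi(b_j)$. These are equal because $\chi(a_j)=\chi(b_j)$ for each $j$. The elements of $E$ are $(0,0)$ together with the four pairs in $\{1,2\}^2$, so $|E|=5$.

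\emph{Not a coset and nonuniform fibres.} The fibre of the first projection of $E$ over $0$ is $\{0\}$, while over $1$ it is $\{1,2\}$; the sizes differ. For any group structure $G$ on $D$, a coset in $G^2$ has cardinality equal to that of a subgroup of $G^2$. By Lagrange that cardinality divides $9$, and $5$ does not divide $9$. As a second argument, the remark before the lemma says cosets have uniform projection fibres, which $E$ lacks.

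The only step needing care is the case analysis for the activity identity: the overlap $a=b=c$ and the representation of the field value inside $D$ must be handled correctly. Everything else is immediate.
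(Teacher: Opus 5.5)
Your proposal is correct and follows essentially the same route as the paper. The first two branches give the \Mal\ identities, and a branch-by-branch check gives the activity identity. Preservation of $E$ then follows by applying that identity in both coordinates. The coset claim fails by Lagrange, since $5$ does not divide $9$, and the fibre sizes $1,2,2$ show nonuniformity. Your case analysis is a more explicit version of the paper's remark that repeated terms cancel in each branch.
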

\begin{proof}
The first two branches give the \Mal\ identities. In each branch, applying $\chi$ gives the displayed sum, with repeated terms cancelling. Preservation of $E$ follows by applying this identity in both coordinates.

The relation is $\{(0,0),(1,1),(1,2),(2,1),(2,2)\}$, so it has size five. A coset in the square of a group of order three must have size dividing nine. Hence $E$ cannot be such a coset. Its fibers over the first coordinate have sizes $1,2,2$.
\end{proof}

The example also supplies a fixed counting obstruction. We count vertex assignments, not accepting paths of an NFA or different algebraic expressions for the same run.

\begin{thm}[Recognition versus counting]\label{thm:counting}
There is a fixed three-state graphoid automaton whose transitions have a common \Mal\ polymorphism, for which recognition is polynomial-time but counting accepting vertex assignments is $\sharpP$-complete under polynomial-time Turing reductions. Hardness holds on graphs with empty boundaries and only unary and ternary transition relations.
\end{thm}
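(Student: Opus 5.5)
We take $D=\{0,1,2\}$ with the \Mal\ operation $p$ of~\eqref{eq:inflated}. The automaton has $I=T=D^*$, a ternary transition
\[
 X=\{(a,b,c)\in D^3:\chi(a)+\chi(b)=\chi(c)\},
\]
and unary transitions $Z=\chi^{-1}(0)=\{0\}$ and $U=\chi^{-1}(1)=\{1,2\}$. By Lemma~\ref{lem:inflated}, $\chi(p(a,b,c))=\chi(a)+\chi(b)+\chi(c)$, so every relation of the form $\chi^{-1}(B)$ with $B$ an affine $\Ftwo$-relation is preserved by $p$; this covers $X$, $Z$ and $U$. All slices of $D^*$ are full powers, hence preserved. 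Polynomial-time recognition then follows from Theorem~\ref{thm:graphoid}, in its \Mal\ form via Theorem~\ref{thm:maltsev}. Membership in $\sharpP$ is immediate, since each vertex assignment can be verified locally.

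For hardness, the key observation is that counting assignments of $D$ to a graph built from these relations equals a weighted count of Boolean activity assignments. Every constraint depends only on $\chi$, so each vertex whose activity bit is $1$ contributes a factor $2$ for the choice between $1$ and $2$. Thus
\[
 \#\text{runs}(H)=\sum_{x}2^{|x|},
\]
where $x$ ranges over the solutions of the $\Ftwo$-affine system defined by $H$ and $|x|$ is the Hamming weight. This is a weighted Boolean CSP with affine constraints together with the unary weight $w(0)=1$, $w(1)=2$.

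We would invoke the dichotomy of Dyer, Goldberg and Jerrum \cite{DGJ2009}. The relevant constraint language consists of the affine relations $x+y=z$ and the constants, together with the nonnegative weight function $(1,2)$. This language is neither pure affine, which would require all weights in $\{0,c\}$, nor product type, since a ternary parity relation is not a product of equalities and unaries. It is also not of any other tractable type, so the problem is $\sharpP$-hard under Turing reductions. To be safe we would check that the constants and weights used are exactly those the dichotomy quantifies over. The constants $0$ and $1$ are realized by $Z$ and $U$, and the weight arises from every vertex automatically. Hence no extra relation is needed and the reduction is parsimonious up to this identity.

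The main obstacle is matching the precise statement of \cite{DGJ2009}, which concerns weighted Boolean \#CSP with rational weight functions $\{0,1\}^r\to\mathbb Q_{\ge0}$. We must ensure that the language $\{x+y=z,\ x=0,\ x=1\}$ with unary weight $(1,2)$, encoded with vertex weights, is a legitimate instance. We must also confirm that parity, weighted nonuniformly, falls outside their tractable classes. An arbitrary parity system can be built from ternary $X$-edges using accumulator chains, as in Proposition~\ref{prop:lowerbound}, with constants supplied by $Z$ and $U$. Boundaries are empty and only unary and ternary labels occur, as the statement requires.
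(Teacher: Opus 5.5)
Your preservation checks, the identity $\#\text{runs}(H)=\sum_x 2^{|x|}$, and your observation that $\{x+y=z,\ x=0,\ x=1,\ w\}$ is neither product type nor pure affine are all correct. The gap is in the direction of the reduction. The identity only shows that counting runs is a \emph{special case} of weighted $\#\mathrm{CSP}$ over this language. To transfer hardness, you must realize \emph{arbitrary} instances of that problem by graphs.

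In the Dyer--Goldberg--Jerrum framework, the unary weight $w$ is an ordinary constraint applied to an arbitrary multiset of variables, and in particular possibly to none of them. In your graphs every vertex carries exactly one factor $w$, because all your relations have the form $\chi^{-1}(B)$ and each active vertex has two values. Equality copies (an $X(x,x',z)$ edge with $Z(z)$) can raise a variable's multiplicity to any $j\geq 1$, but nothing can lower it to $0$. Your construction therefore only reaches the subfamily of instances in which every variable is weighted at least once, and the dichotomy says nothing about that subfamily. Your sentence ``the weight arises from every vertex automatically'' is exactly where the claimed parsimonious reduction fails. For the same reason, the accumulator chains you mention are not weight-neutral: every active accumulator vertex contributes another factor $2$.

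The gap can be closed, but it needs an extra idea. One option is interpolation. Let $M$ be the number of variables that should be unweighted. Give each of them multiplicity $t=1,\ldots,M+1$ by adding copies, and write the resulting counts as $\sum_{m=0}^{M}a_m2^{tm}$. Solving the Vandermonde system recovers the $a_m$, and you return $\sum_m a_m$; this is a polynomial-time Turing reduction.

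The paper avoids interpolation by taking $U=\{0,1\}$ instead of $\chi^{-1}(1)$. Since $p$ restricts to Boolean minority on $\{0,1\}$, this $U$ is preserved, and vertices restricted to it carry no multiplicity. A fresh unrestricted vertex $z$ in $C(x,y,z)$ then contributes exactly the binary weight with matrix $\begin{pmatrix}1&2\\2&1\end{pmatrix}$. That single function is neither pure affine nor of product type (its determinant is $-3$), so the dichotomy applies to it directly. Every instance of the weighted problem for this function is realized by a graph with exactly the same count.
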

\begin{proof}
Use the operation in~\eqref{eq:inflated}, the unary relation $U=\{0,1\}$, and
\[
 C=\{(a,b,c)\in D^3:\chi(c)=\chi(a)+\chi(b)\}.
\]
Include the singleton transitions $\{0\}$ and $\{1\}$ as well. The operation restricts to Boolean minority on $U$, so $U$ is preserved. Singleton relations are preserved by idempotence. Applying the identity of Lemma~\ref{lem:inflated} shows that $C$ is preserved. Take $I=T=\{\varepsilon\}$. Recognition is polynomial by Theorem~\ref{thm:graphoid}.

For variables $x,y$ restricted to $U$, the number of extensions to a fresh unrestricted variable $z$ satisfying $C(x,y,z)$ is the Boolean function $f$ with matrix
\begin{equation}\label{eq:weight}
 (f(a,b))_{a,b\in\{0,1\}}=
 \begin{pmatrix}1&2\\2&1\end{pmatrix}.
\end{equation}
For a finite list of ordered pairs $(x_j,y_j)$ of Boolean variables, consider the weighted counting problem
\[
 Z_f=\sum_{h:V\to\{0,1\}}\prod_j f(h(x_j),h(y_j)).
\]
Replace every factor by a constraint $C(x_j,y_j,z_j)$ with its own fresh variable $z_j$, and impose $U$ on each original variable. For any assignment of the original variables, the fresh variables have independent extension choices. Therefore the number of satisfying assignments of the resulting unweighted instance is exactly $Z_f$.

The weighted Boolean dichotomy \cite[Theorem~4]{DGJ2009} makes evaluation of $Z_f$ $\sharpP$-hard. To check its hypotheses directly, $f$ has full support and unequal positive values, so it is not pure affine. A full-support product-type binary function must factor into unary weights and consequently has matrix rank one. The determinant of~\eqref{eq:weight} is $-3$, so $f$ is not of product type either.

Interpret each unary or ternary constraint as an edge of ranks $(1,0)$ or $(3,0)$ and use empty boundaries. This gives the required graph instance over a fixed alphabet. Counting is in $\sharpP$, since a vertex assignment can be verified in polynomial time using the fixed finite relations. The reduction preserves the integer count, and the hardness from the weighted dichotomy uses polynomial-time Turing reductions.
\end{proof}

The failure of general \Mal\ preservation to guarantee tractable counting is already part of the counting-CSP landscape. Theorem~\ref{thm:counting} gives an explicit realization in the graphoid model. Together with Lemma~\ref{lem:inflated}, it explains why a compact support relation cannot in general be supplemented with one uniform multiplicity to recover all extension counts.

\section{Concluding remarks}

The regular presentation provides exactly the two kinds of witnesses required by compact representation algorithms: exchanges after a common prefix and assignments on bounded sets of coordinates. Once these witnesses have been extracted, arbitrary scopes are assembled through products and binary equalities. This separates the automaton compilation from the established algebraic updates and gives a polynomial-time algorithm whose complexity is measured in the succinct input size.

The prime-field family supplies a second representation mechanism. Its masking term reconstructs each activity fibre from one affine value hull, and the common activity classes determine independent value factors. This produces optimal quadratic-bit codes beyond the \Mal\ and near-unanimity cases, with direct constructions from NFAs and arbitrary generators. The result concerns this explicit family. Determining whether equally small effective encodings exist for arbitrary 3-edge algebras remains open here.

For graphoid automata, the resulting boundary relations support exact finite-graph comparison and composition without a width assumption. The quadratic bounds are optimal as encoding bounds, but leave room for faster updates and smaller representations under further restrictions on the transition language. The fixed basis $\Gamma_p$ also permits a boundary relation to be replaced constructively by a graph of quadratic combinatorial size.

The counting example suggests a further distinction worth studying: which common edge polymorphisms, together with which transition languages, permit efficient representations of extension multiplicities as well as supports? Another question concerns succinct descriptions beyond NFAs. The proof applies whenever the description admits polynomial-time extraction of all fork witnesses and all bounded projection witnesses. Identifying useful representation formalisms with these two properties would extend the compilation theorem while keeping its algebraic part unchanged.

Reference implementations and executable verification are archived on Zenodo as version~1.0.0 of the reproducibility package~\cite{Kalampakas2026Software}. The package implements NFA witness extraction, the direct \Mal\ calculus, the general edge-representation compiler, and the symbolic NFA and generator compilers of Section~\ref{sec:normalforms}. Independent finite-operation closure checks test the generator construction on small relations. Further checks cover the worked example, intersections, ordered projections, and arity-$64$ inputs whose represented relations are not enumerated. Execution instructions, dependency specifications, reference outputs, checksums, and a combined verification runner are included. The documentation specifies the scope of each routine and distinguishes symbolic compilation from enumerating verification oracles.

\bibliographystyle{alphaurl}
\bibliography{mybib}
\end{document}